\DeclareRobustCommand{\mklyxadded}[1]{\textcolor{lyxadded}\bgroup#1\egroup}
\DeclareRobustCommand{\mklyxdeleted}[1]{\textcolor{lyxdeleted}\bgroup\mklyxsout{#1}\egroup}
\DeclareRobustCommand{\mklyxsout}[1]{\ifx\\#1\else\sout{#1}\fi}
\DeclareRobustCommand{\lyxdeleted}[4][]{\mklyxdeleted{#4}}
\theoremstyle{plain}
\newtheorem{lem}{\protect\lemmaname}
\author{


Hao~Sun, Junting~Chen,~\IEEEmembership{Member,~IEEE}, and Xianghao~Yu,~\IEEEmembership{Senior Member,~IEEE}
\thanks{Hao Sun was with the School of Science and Engineering (SSE), Shenzhen Future Network of Intelligence Institute (FNii-Shenzhen), and
Guangdong Provincial Key Laboratory of Future Networks of Intelligence, The Chinese University of Hong Kong, Shenzhen, Guangdong 518172, China. He is now with the Department of Electrical Engineering, City University of Hong Kong, Hong Kong (e-mail:
hao.sun@cityu.edu.hk).}
\thanks{Junting Chen is with the School of Science and Engineering (SSE), Shenzhen Future Network of Intelligence Institute (FNii-Shenzhen), and
Guangdong Provincial Key Laboratory of Future Networks of Intelligence, The Chinese University of Hong Kong, Shenzhen, Guangdong 518172, China
(e-mail: juntingc@cuhk.edu.cn).}
\thanks{Xianghao Yu is with the Department of Electrical Engineering, City University of Hong Kong, Hong Kong (e-mail:alex.yu@cityu.edu.hk).}
}
\newcommand{\newac}{\newacronym}
\newcommand{\ac}{\gls}
\renewcommand{\lyxdeleted}[3]{{\color{lyxdeleted}{}}}
\providecommand{\lemmaname}{Lemma}
\begin{document}
\title{MIMO Beam Map Reconstruction via Toeplitz-Structured Matrix-Vector
Tensor Decomposition}
\maketitle
\begin{abstract}
As wireless networks progress toward sixth-generation (6G), understanding
the spatial distribution of directional beam coverage becomes increasingly
important for beam management and link optimization. Multiple-input
multiple-output (MIMO) beam map provides such spatial awareness, yet
accurate construction under sparse measurements remains difficult
due to incomplete spatial coverage and strong angular variations.
This paper presents a tensor decomposition approach for reconstructing
MIMO beam map from limited measurements. By transforming measurements
from a Cartesian coordinate system into a polar coordinate system,
we uncover a matrix-vector outer-product structure associated with
different propagation conditions. Specifically, we mathematically
demonstrate that the matrix factor, representing beam-space gain,
exhibits an intrinsic Toeplitz structure due to the shift-invariant
nature of array responses, and the vector factor captures distance-dependent
attenuation. Leveraging these structural priors, we formulate a regularized
tensor decomposition problem to jointly reconstruct line-of-sight
(LOS), reflection, and obstruction propagation conditions. Simulation
results confirm that the proposed method significantly enhances data
efficiency, achieving a normalized mean square error (NMSE) reduction
of over $20$\% compared to state-of-the-art baselines, even under
sparse sampling regimes. 
\end{abstract}

\begin{IEEEkeywords}
Matrix-vector, MIMO beam map, polar coordinate system, sparse measurements,
tensor decomposition, Toeplitz.
\end{IEEEkeywords}

\section{Introduction}

Massive \ac{mimo} has become a key enabler for fifth-generation (5G)
and is expected to remain central in future sixth-generation (6G)
networks, owing to its ability to provide high spatial multiplexing
gains, strong beamforming gains, and flexible interference mitigation
\cite{XueJiCMaS:J24,BraBehSay:J13,YuXZhaHaeLet:J17}. However, fully
harvesting these benefits requires high-dimensional \ac{csi}, which
leads to substantial channel training and feedback overhead. To alleviate
this burden, the MIMO beam map has emerged as a structured representation
that describes the spatial distribution of directional beams, signal
strength, and link quality across the coverage area \cite{CheCheCui:J25,wang2025beamckmframeworkchannelknowledge}.
By enabling interference management, adaptive beamforming, and energy-efficient
power allocation without exhaustive channel training \cite{HaTRomLop:C24,ChoMicLovKro:J21},
MIMO beam map provides a powerful tool for planning and optimizing
dense urban and complex indoor deployments \cite{MalFanYan:C17,ChiMasAltJem:J24,FenLinWanWan:J24}.

Nevertheless, how to efficiently construct accurate MIMO beam map
from limited measurements remains largely unexplored. A first challenge
arises from the incomplete spatial coverage due to sparse measurements.
When observations are limited, it is hard to fully capture beam propagation
patterns or signal interactions, resulting in missing regions and
underrepresented areas in the map. Second, MIMO beam map typically
exhibits stronger spatial variations, as the \ac{rss} is influenced
not only by the propagation environment but also by the beamforming
pattern. Consequently, conventional interpolation- or tensor-based
methods generally require a larger number of measurements to achieve
satisfactory accuracy.

MIMO beam map can be viewed as a particular type of radio map \cite{ZenCheXuWu:J24}.
Classical interpolation methods, such as Kriging \cite{SatSutIna:J21,VerFunRaj:J16},
local polynomial regression \cite{Fan:b96}, and kernel-based schemes
\cite{HamBefBal:C17}, estimate unobserved locations by exploiting
spatial smoothness or neighborhood information. Kriging typically
models spatial correlation functions explicitly, which may incur substantial
computational overhead in large-scale scenarios and may require prior
knowledge of the underlying spatial structure. Local polynomial regression
and kernel-based approaches rely on locality and smoothness assumptions,
which may become less effective when dealing with high-dimensional
feature spaces or irregularly distributed measurements. Matrix-based
methods leverage the observation that radio maps, when arranged in
matrix form, often exhibit low-rank characteristics. This property
enables techniques such as Bayesian learning \cite{WanZhuLin:J24b,WanZhuLin:J24},
dictionary learning \cite{KimGia:C13} and matrix completion \cite{SunChe:J22}
to infer missing measurements or extrapolate beam-related information
in unobserved regions. Tensor-based methods \cite{ZhaFuWan:J20,CheWanHua:J25,ChoMicLovKro:J21}
further exploit multi-dimensional structures inherent in radio maps
by modeling correlations across multiple domains, such as space, frequency,
or beam indices, using low-rank tensor decomposition. However, these
methods may not fully utilize the underlying latent structure in MIMO
beam map, which arises from distinct beam patterns associated with
various angles. 

Deep learning-based approaches including \ac{vae} \cite{TegRom:J21},
\ac{gan} \cite{HuHuaChen:J23,TimShrFux:J24,ShrFuHong:J22}, and diffusion-based
methods \cite{WanTaoCheYin:J25,LuoLiZPenChe:J25}, treat the radio
map as one or more layers of 2D images, using neural networks to recognize
common patterns based on extensive training data. These approaches
require large amounts of training data, which may not always be available.

To exploit the underlying structure embedded in beam patterns across
different beam indices, this paper first transforms the sparse Cartesian
measurements into a polar coordinate system parameterized by the beam
angle at the \ac{bs}, the spatial angle of the \ac{ue}, and the
propagation distance. These measurements are then aggregated into
a three-dimensional (3D) tensor, where each entry corresponds to the
\ac{rss} value within a specific polar bin defined by the angle and
distance parameters. Under different propagation conditions (e.g.,
\ac{los}, reflection, and obstruction), the resulting tensor exhibits
a matrix-vector outer-product structure. The matrix factor, referred
to as the beam-space gain, captures the interaction between the beamforming
pattern and the spatial angles of the UE, whereas the vector factor
models distance-dependent power attenuation. Furthermore, the beam-space
gain matrix is shown to exhibit an inherent Toeplitz structure, reflecting
the shift-invariant angular correlation induced by angular sampling
and beam index ordering. 

Building on this structural insight, we develop a Toeplitz-structured
matrix-vector tensor decomposition framework to reconstruct the MIMO
beam map in the presence of multiple propagation conditions. In the
general case where \ac{los}, reflection, and obstruction components
coexist, the beam-space tensor is modeled as a sum of a few matrix-vector
terms, each corresponding to one propagation condition. The resulting
optimization problem is tackled via an alternating minimization procedure.
Specifically, the matrix factors are updated by proximal-gradient
iterations under the Toeplitz regularization, whereas the vector factors
are obtained from a constrained quadratic program. 

The contributions of this paper are summarized as follows:
\begin{itemize}
\item We propose a tensor representation based on polar coordinate transformation
for MIMO beam map construction. The tensor model takes the form of
matrix-vector outer-products, where the matrix component is shown
to exhibit a Toeplitz structure that enables efficient construction
under sparse samples.
\item We develop a generalized Toeplitz-structured tensor decomposition
formulation to handle complex scenarios with strong reflections and
obstructions. Specifically, we incorporate a soft Toeplitz regularization
for the beam-space matrices, coupled with monotonicity constraints
on the distance-domain attenuation vectors to ensure physical consistency.
\item We design a structure-aware alternating minimization for the proposed
matrix-vector tensor decomposition model. For general propagation
scenarios, the algorithm alternately updates the beam-space gain matrix
and the distance-domain attenuation vectors via a Toeplitz-like regularization.
For a special case, where only a pure LOS region exists, we further
exploit the symmetric Toeplitz structure and impose it as a hard constraint,
which allows each beam-space gain matrix to be parameterized by a
single Toeplitz row and estimated via a closed-form least-squares
update, reducing the degrees of freedom and computational complexity.
\item Extensive simulations across different sampling ratios and propagation
environments demonstrate that the proposed Toeplitz-structured tensor
decomposition method obtains more than $20$\% improvement in reconstruction
NMSE compared with KNN, TPS, and conventional BTD-based baselines.
\end{itemize}

The rest of the paper is organized as follows. Section~\ref{sec:System-Model}
introduces the propagation, beam, and measurement models. Section~\ref{sec:Coordinate-Transformation-for}
develops a structured tensor representation of the MIMO beam map by
transforming Cartesian measurements into the polar domain and revealing
the underlying Toeplitz matrix-vector structure. Section~\ref{sec:Toeplitz-Structured-Matrix-Vecto}
formulates the Toeplitz-regularized matrix-vector tensor decomposition
problem and presents an alternating minimization algorithm for its
solution. Section \ref{sec:Scenario-Specific-Simplified-Mod} derives
simplified decomposition models for a pure LOS region. Numerical results
are presented in Section \ref{sec:Numerical-Results} and conclusion
is given in Section \ref{sec:Conclusion}.

\emph{Notation:} Vectors are written as bold italic letters $\bm{x}$,
matrices as bold capital italic letters $\bm{X}$, and tensors as
bold calligraphic letters $\bm{\mathcal{\bm{X}}}$. For a matrix $\bm{X}$,
$[\bm{X}]_{ij}$ denotes the entry in the $i$th row and $j$th column
of $\bm{X}$. For a tensor $\bm{\mathcal{\bm{X}}}$, $[\bm{\mathcal{X}}]_{i,j,k}$
denotes the entry under the index $(i,j,k)$. The symbol `$\circ$'
represents the outer product, `$*$' represents the element-wise product,
$\|\cdot\|_{2}$ represents the $l_{2}$ norm, and $\|\cdot\|_{F}$
represents the Frobenius norm. The operator $\mathrm{vec}(\cdot)$
stacks the columns of a matrix (or the entries of a tensor slice)
into a column vector, and $\mathrm{diag}(\bm{x})$ denotes a diagonal
matrix with the entries of $\bm{x}$ on its main diagonal. The transpose
is denoted by $(\cdot)^{\text{T}}$. $\mathbb{E}\{\cdot\}$ denotes
statistical expectation. $\mathcal{O}(x)$ means $|\mathcal{O}(x)|/x\leq C$,
for all $x>x_{0}$ with $C$ and $x_{0}$ are positive real numbers. 

\section{System Model\label{sec:System-Model}}

\cprotect\subsection{Propagation and Beam Model
\begin{figure}
\protect\begin{centering}
\protect\includegraphics{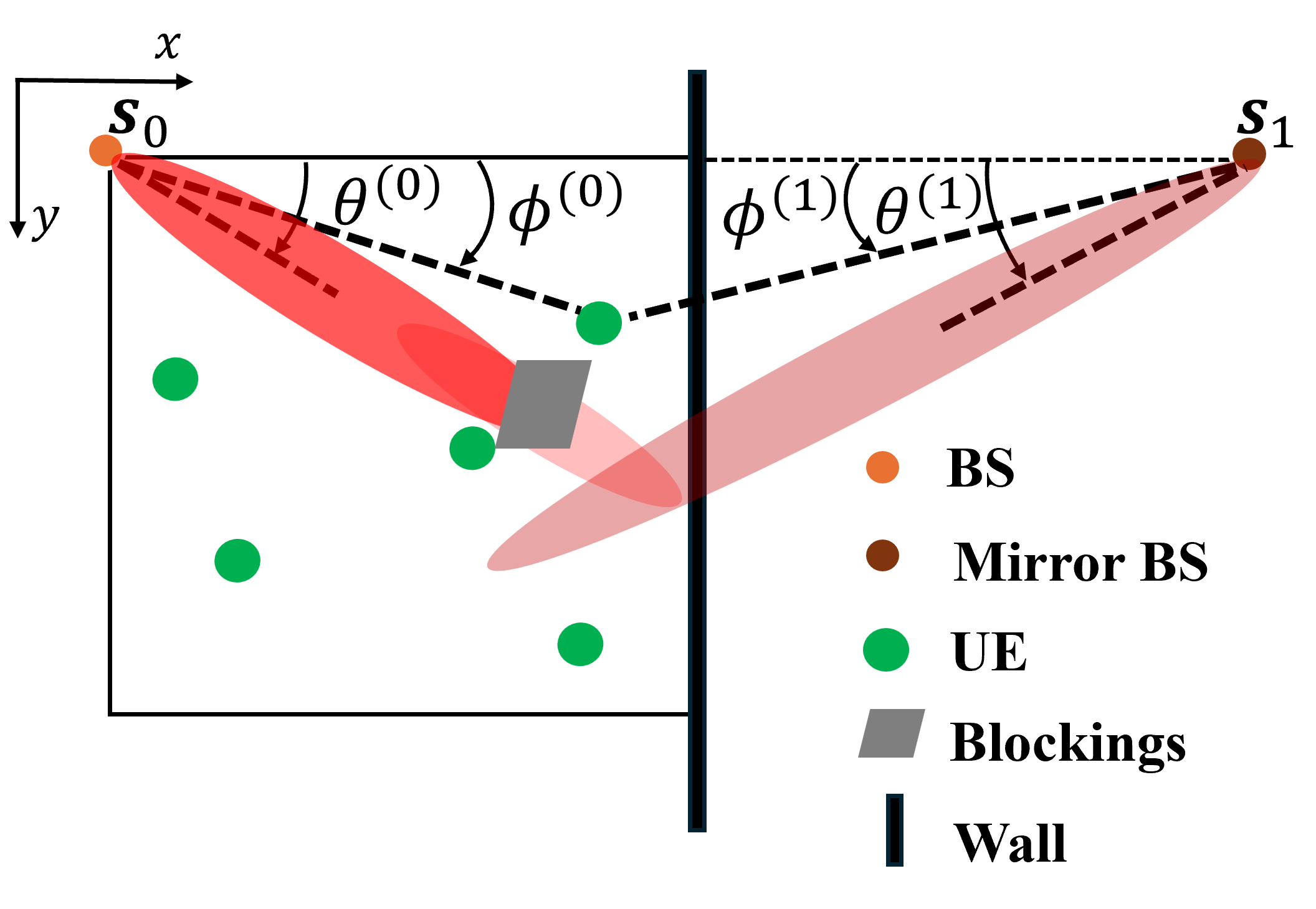}\protect
\par\end{centering}
\protect\caption{\label{fig:block_reflect}Complex scenario showing a direct path from
the BS, a reflected path via the mirror BS, and an obstructed path
affected by physical blockings.}
\end{figure}
}

Consider a downlink channel from the \ac{bs} to \ac{ues} in an indoor
environment. The \ac{bs}, located at the origin $\bm{s}_{0}=(0,0)$,
is equipped with a \ac{ula} of $N_{t}$ elements. The array elements
are aligned along the $y$-axis, with their pointing direction toward
the positive $x$-axis. The indoor region $\mathcal{D}$ of interest
is located in the first quadrant of the coordinate system as shown
in Fig.~\ref{fig:block_reflect}.

Consider the radio propagation from $\bm{s}_{0}$ to any position
in $\mathcal{D}$. Only a single reflection from the right wall in
Fig.~\ref{fig:block_reflect} is taken into account, since higher-order
reflections suffer severe attenuation due to multiple propagation
losses and thus contribute negligibly compared with the direct and
single-reflection paths \cite{GongZhaoHuKLuQ:J22}. Thus, the right
wall, which directly faces the BS, is the dominant reflector. According
to the law of specular reflection, we introduce a mirror position
$\bm{s}_{1}$, defined such that the right wall is the perpendicular
bisector of the line segment between $\bm{s}_{0}$ and $\bm{s}_{1}$.
In this way, the reflected path via the right wall is geometrically
equivalent to a mirror direct path from the mirror BS at $\bm{s}_{1}$.

The \ac{bs} is configured with a set of $I$ beams for environment
sensing, where each beam has a single mainlobe pointing at a specific
direction $\theta_{i}$, $i=1,2,\dots,I$, and the sidelobes have
significantly lower energy than that of the mainlobe. A typical example
of constructing such beams under \ac{ula} is to use the set of \ac{dft}
vectors for beamforming \cite{BraBehSay:J13,Say:J02}. As the \ac{bs}
is placed at the corner of $\mathcal{D}$, it suffices to only consider
the beam directions $\theta_{i}\in(0,\pi/2)$ for the sensing purpose.

Denote the array response vector $\bm{e}(\theta)\in\mathbb{C}^{N_{t}}$
of the \ac{ula} at \ac{bs} as
\[
\bm{e}(\theta)=[1,\ \text{exp}(-\jmath\pi\text{sin}\theta),\cdots,\text{exp}(-\jmath\pi(N_{t}-1)\text{sin}\theta)]^{\text{T}}
\]
where $\jmath=\sqrt{-1}$. The channel vector $\bm{h}$ for a position
$\bm{\bm{z}}\in\mathcal{D}$ is modeled as 
\begin{equation}
\bm{h}=\alpha_{0}\bm{e}(\phi_{0}(\bm{z}))e^{-\jmath2\pi f_{c}\tau_{0}}+\alpha_{1}\bm{e}(\phi_{1}(\bm{z}))e^{-\jmath2\pi f_{c}\tau_{1}}+\bm{h}^{\epsilon}\label{eq:channel_vec}
\end{equation}
where $\alpha_{0}$ and $\tau_{0}$ denote, respectively, the complex
gain and the delay of the direct path, which can possibly be significantly
attenuated due to blockage. The function $\phi_{0}(\bm{z})$ captures
the angle from the \ac{bs} at $\bm{s}_{0}$ to the position $\bm{z}$.
Likewise, $\alpha_{1}$ and $\tau_{1}$ denote the complex gain and
delay of the possible reflected path from the right wall. Note that,
such a path can also be blocked or may not exist, leading to $\alpha_{1}\approx0$.
The function $\phi_{1}(\bm{z})$ captures the angle from the mirror
BS to the position $\bm{z}$, representing the angle of the reflected
path leaving. Finally, the term $\bm{h}^{\epsilon}$ captures all
the residual paths.

It is possible that the reflecting environment is more complicated
and there are a few more mirror BSs. The construction of the reflection
due to additional mirror BSs is similar to the second term in (\ref{eq:channel_vec}).

\subsection{Measurement Model}

The direct and reflected paths can be separated by exploiting the
inherent characteristics of received signals \cite{GongZhaoHuKLuQ:J22,DuJCaoJinLiS:J24}.
Based on this, we establish the measurement model. For the $i$th
beam, the RSS associated with the direct path at location $\bm{z}_{m}$
is modeled as
\begin{equation}
\gamma_{i}^{(0)}(\bm{z}_{m})=\mathbb{E}\left\{ \left|\alpha_{0}\bm{w}_{i}^{\mathrm{H}}\bm{e}(\phi_{0}(\bm{z}_{m}))e^{-\jmath2\pi f_{c}\tau_{0}}\right|^{2}\right\} +n^{(0)}\label{eq:measurement_model_dir}
\end{equation}
where $\mathbb{E}\{\cdot\}$ denotes the expectation over the randomness
induced by small-scale fading contained in $\alpha_{0}$, $\bm{w}_{i}\triangleq\bm{e}(\theta_{i})$
is the beamforming vector, and $n^{(0)}$ captures the residual energy
contributed by scattered paths whose delays are close to that of the
direct path and thus cannot be fully resolved. We assume that $n^{(0)}$
follows a Gaussian distribution, i.e., $n^{(0)}\sim\mathcal{N}(0,\sigma_{0}^{2})$.

Similarly, the RSS corresponding to the reflected path is modeled
as
\begin{equation}
\gamma_{i}^{(1)}(\bm{z}_{m})=\mathbb{E}\left\{ \left|\alpha_{1}\bm{w}_{i}^{\mathrm{H}}\bm{e}(\phi_{1}(\bm{z}_{m}))e^{-\jmath2\pi f_{c}\tau_{1}}\right|^{2}\right\} +n^{(1)}\label{eq:measurement_model_ref}
\end{equation}
where $n^{(1)}$ accounts for the energy of scattered paths that exhibit
similar delays to the reflected path $\alpha_{1}$ and thus cannot
be distinguished. We also assume that $n^{(1)}\sim\mathcal{N}(0,\sigma_{1}^{2})$.

In the above formulation, it is assumed that the direct path $\alpha_{0}$
and the reflected path $\alpha_{1}$ can be extracted from the channel
vector $\bm{h}$ with some uncertainty, represented by $n^{(0)}$
and $n^{(1)}$, respectively. The accuracy of this extraction depends
on the resolution capability of the power-angular-delay profile, which
in turn is determined by the system bandwidth and antenna array structure.

Under this measurement model, the $i$th beam yields two types of
observations at location $\bm{z}_{m}$, namely $\gamma_{i}^{(0)}(\bm{z}_{m})$
and $\gamma_{i}^{(1)}(\bm{z}_{m})$, corresponding to the direct and
reflected paths, respectively. Collecting all such observations forms
two sparse measurement sets
\[
\bigl\{(\theta_{i},\bm{z}_{m},\gamma_{i}^{(0)}(\bm{z}_{m}))\bigr\}\quad\text{and}\quad\bigl\{(\theta_{i},\bm{z}_{m},\gamma_{i}^{(1)}(\bm{z}_{m}))\bigr\}.
\]
Our goal is to reconstruct the complete MIMO beam map from these sparse
measurements.

\section{Toeplitz-Structured Tensor Model via Coordinate Transformation\label{sec:Coordinate-Transformation-for}}

In this section, we show that by transforming the measurement set
$\{\theta_{i},\bm{z}_{m},\gamma_{i}^{(l)}(\bm{z}_{m})\}$, originally
defined in the Cartesian coordinate system $(x,y)$, into the polar
coordinate system $(\phi,d)$, the resulting tensor exhibits a matrix-vector
structure, where under certain conditions, the matrix component becomes
\emph{Toeplitz}.

\subsection{Coordinate Transformation}

We focus on the \ac{los} scenario for illustrative purposes, as the
reflected-path case can be handled in a similar way with the assistance
from the mirror BS $s_{1}$. For notational simplicity, we omit the
superscript `$(l)$' in this subsection.

\subsubsection{Transformation from Cartesian to Polar Coordinates\label{subsec:From-Cartesian-Coordinate}}

Recall that the BS is located at $\bm{s}_{0}=(x_{0},y_{0})$. For
each measured UE position $\bm{z}_{m}=(x_{m},y_{m})$, its polar coordinates
relative to $\bm{s}_{0}$ are defined as
\begin{equation}
d_{m}=\bigl\|\bm{z}_{m}-\bm{s}_{0}\bigr\|_{2},\quad\phi_{m}=\text{arctan}\frac{y_{m}-y_{0}}{x_{m}-x_{0}}
\end{equation}
where $\text{arctan}$ returns the polar angle of the point $(x_{m},y_{m})$
\ac{wrt} the positive $x$-axis.

To construct the measurement tensor, we discretize the angular and
radial domains into uniform grids 
\[
\{\phi_{1},\ldots,\phi_{J}\},\quad\phi_{j}=(j-1)\Delta_{\phi}
\]
\[
\{d_{1},\ldots,d_{K}\},\quad d_{k}=(k-1)\Delta_{d}
\]
where $\Delta_{\phi}$ and $\Delta_{d}$ denote the discretization
steps in the angular and radial domains. Each measurement $(\phi_{m},d_{m})$
is assigned to its nearest angular and radial bins $(\phi_{j},d_{k})$
via
\begin{equation}
j=1+\left\lfloor \frac{\phi_{m}}{\Delta_{\phi}}\right\rceil ,\ k=1+\left\lfloor \frac{d_{m}}{\Delta_{d}}\right\rceil ,
\end{equation}
where $\lfloor\cdot\rceil$ denotes rounding to the nearest integer.

Denote the measurement $\gamma_{i}(\bm{z}_{m})$ as $\gamma_{m,i}$.
We then construct a structured tensor $\bm{\mathcal{X}}\in\mathbb{R}^{I\times J\times K}$
where the entry $[\bm{\mathcal{X}}]_{i,j,k}$ aggregates all measurements
taken by beam $\theta_{i}$ from the polar bin $(\phi_{j},d_{k})$:
\begin{equation}
[\bm{\mathcal{X}}]_{i,j,k}=\begin{cases}
{\displaystyle \frac{1}{|\mathcal{S}_{j,k}|}\sum\limits_{m\in\mathcal{S}_{j,k}}\gamma_{m,i},} & \text{if }\mathcal{S}_{j,k}\neq\varnothing\\[1ex]
\text{unobserved}, & \text{otherwise}
\end{cases}\label{eq:aggregation}
\end{equation}
where $\mathcal{S}_{j,k}$ denotes the index set of the measurements
whose angular and radial coordinates are in the bin $(\phi_{j},d_{k})$.

\subsubsection{Toeplitz Structure}

To understand the intrinsic structure of the tensor $\bm{\mathcal{X}}$,
we analyze the measurement model for the direct path established in
(\ref{eq:measurement_model_dir}). The first term in (\ref{eq:measurement_model_dir})
represents the dominant signal component that we aim to model. For
a location $\bm{z}_{m}$ with polar coordinates $(\phi_{j},d_{k})$,
the RSS in (\ref{eq:measurement_model_dir}) satisfies
\begin{equation}
\mathbb{E}\left\{ \gamma_{m,i}\right\} =\mathbb{E}\left\{ |\alpha_{0}|^{2}\right\} \cdot\left|\bm{e}^{\mathrm{H}}(\theta_{i})\bm{e}(\phi_{j})\right|^{2}.\label{eq:separability}
\end{equation}
Then, (\ref{eq:measurement_model_dir}) can be written as a product
of a distance-dependent power attenuation term $P(d_{k})$ and a beamforming
gain term $\bm{e}^{\mathrm{H}}(\theta_{i})\bm{e}(\phi_{j})$
\begin{equation}
\gamma_{m,i}=P(d_{k})\cdot\left|\bm{e}^{\mathrm{H}}(\theta_{i})\bm{e}(\phi_{j})\right|^{2}+\epsilon\label{eq:redefine_r}
\end{equation}
where $P(d_{k})$ approximates $\mathbb{E}\{|\alpha_{0}|^{2}\}$ which
decays with distance $d_{k}$ and $\epsilon\sim\mathcal{N}(0,\delta^{2})$
captures the random noise that is not captured by the first term.

Based on the structure of (\ref{eq:redefine_r}), we define a distance-domain
attenuation vector $\bm{\rho}\in\mathbb{R}^{K}$, whose entries are
given by $\rho_{k}=P(d_{k})$. We further define a beam-space gain
matrix $\bm{G}\in\mathbb{R}^{I\times J}$, which exclusively captures
the interaction between the beamforming angles $\theta_{i}$ and the
spatial angles $\phi_{j}$. Its $(i,j)$-th entry is given by
\begin{align}
[\bm{G}]_{i,j} & =\left|\bm{e}^{\mathrm{H}}(\theta_{i})\bm{e}(\phi_{j})\right|^{2}\nonumber \\
 & =\left|\sum_{n=0}^{N_{t}-1}\exp\left(\jmath\pi n(\sin(\theta_{i})-\sin(\phi_{j}))\right)\right|^{2}.\label{eq:gain_matrix}
\end{align}

Consequently, from (\ref{eq:redefine_r})\textendash (\ref{eq:gain_matrix}),
the measurements $\gamma_{m,i}$ form a tensor $\bm{\mathcal{X}}$
that can be represented by a rank-1 matrix-vector outer product model
\begin{equation}
\bm{\mathcal{X}}=\bm{G}\circ\bm{\rho}+\bm{\mathcal{E}},\label{eq:tensor_model}
\end{equation}
where $\bm{\mathcal{E}}$ captures the residual energy which is modeled
as noise.

A crucial property of the beam-space gain matrix $\bm{G}$ is that,
under specific conditions, it becomes highly structured, i.e., Toeplitz
and symmetric.
\begin{lem}[Toeplitz and Symmetric Structure]
\label{lem:symemtric}If the angles $\theta_{i}$ and $\phi_{j}$
are chosen such that their sines form uniform grids, i.e., $\sin(\theta_{i})=i/I$
and $\sin(\phi_{j})=j/J$, and $I=J$, then the matrix $\bm{G}$ exhibits
a Toeplitz and symmetric structure.
\end{lem}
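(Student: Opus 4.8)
The plan is to show that, under the stated uniform-sine grid conditions, the $(i,j)$-th entry of $\bm{G}$ depends on the two indices only through their difference $i-j$, and that the univariate function so obtained is even; the first fact yields the Toeplitz structure and the second yields symmetry. First I would substitute $\sin(\theta_i)=i/I$, $\sin(\phi_j)=j/J$, and $I=J$ directly into the expression for $[\bm{G}]_{i,j}$ in (\ref{eq:gain_matrix}). This gives $\sin(\theta_i)-\sin(\phi_j)=(i-j)/I$, so that
\[
[\bm{G}]_{i,j}=\Bigl|\sum_{n=0}^{N_t-1}\exp\bigl(\jmath\pi n(i-j)/I\bigr)\Bigr|^2 = g\Bigl(\tfrac{i-j}{I}\Bigr),\qquad g(x)\triangleq\Bigl|\sum_{n=0}^{N_t-1}e^{\jmath\pi n x}\Bigr|^2 .
\]
Since the entry is a function of $i-j$ alone, and since $I=J$ makes $\bm{G}$ a square matrix, this is exactly the defining property of a Toeplitz matrix (entries constant along each diagonal).

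Next I would establish symmetry by showing that $g$ is even. Conjugating the geometric sum gives $\overline{\sum_{n=0}^{N_t-1}e^{\jmath\pi n x}}=\sum_{n=0}^{N_t-1}e^{\jmath\pi n(-x)}$, and the squared modulus is invariant under conjugation, so $g(-x)=g(x)$; equivalently, summing the series produces the Fej\'er-kernel form $g(x)=\sin^2(\pi N_t x/2)/\sin^2(\pi x/2)$, which is manifestly even (with the removable value $g(0)=N_t^2$ corresponding to $i=j$). Hence
\[
[\bm{G}]_{i,j}=g\Bigl(\tfrac{i-j}{I}\Bigr)=g\Bigl(\tfrac{j-i}{I}\Bigr)=[\bm{G}]_{j,i},
\]
so that $\bm{G}=\bm{G}^{\text{T}}$.

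Combining the two steps, $\bm{G}$ is a square matrix that is constant along each diagonal and invariant under transposition, hence symmetric Toeplitz, which proves the lemma. The computation is elementary, so I do not anticipate a genuine obstacle; the two points deserving a line of care are (i) the evenness of $g$, for which the conjugation argument — or, more explicitly, the Fej\'er-kernel identity — is the cleanest route, and (ii) the fact that diagonal-constancy genuinely relies on matching the two angular grids through $I=J$: without it one only has $\sin(\theta_i)-\sin(\phi_j)=i/I-j/J$, which is not a function of $i-j$, and both structural claims fail.
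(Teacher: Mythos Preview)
Your proposal is correct and follows essentially the same approach as the paper: both arguments reduce $\sin(\theta_i)-\sin(\phi_j)$ to $(i-j)/I$ to obtain diagonal-constancy, and both invoke the conjugation invariance $|z|^2=|\bar z|^2$ for symmetry. Your packaging via the auxiliary function $g$ and the optional Fej\'er-kernel identity is a slightly cleaner presentation of the same computation, but there is no substantive difference in method.
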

\begin{proof}
See Appendix \ref{sec:Proof-of-Lemma symmetric}.
\end{proof}
This structural property is of particular importance because it implies
that the entire beam-space gain matrix $\bm{G}$ only has a small
degrees-of-freedom, thereby facilitating efficient tensor reconstruction
from sparse measurements.

\begin{figure}
\subfigure[]{\includegraphics[width=0.5\columnwidth]{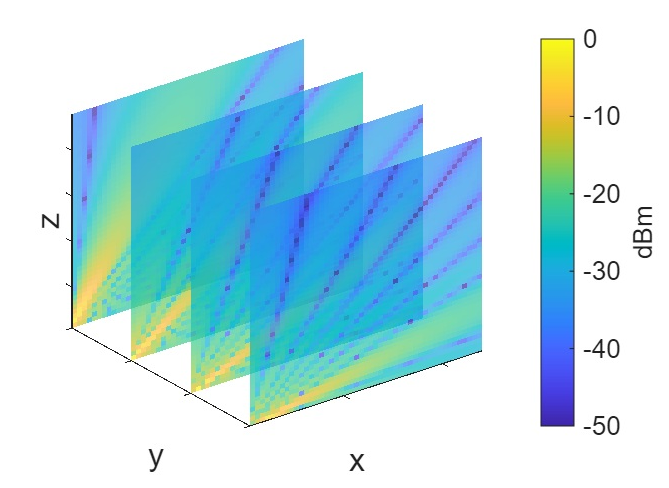}}\subfigure[]{\includegraphics[width=0.5\columnwidth]{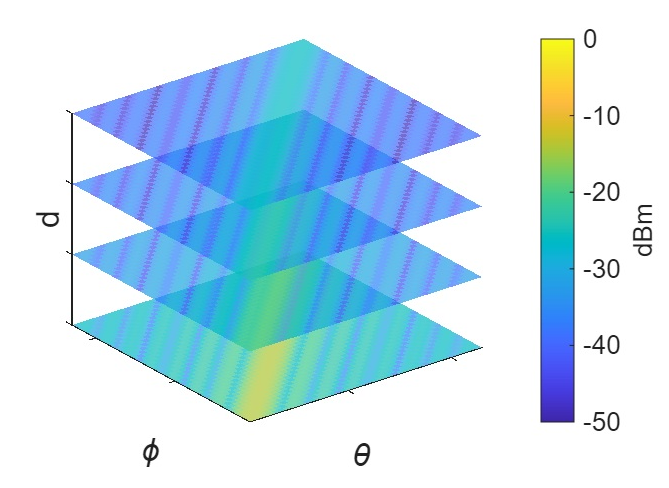}}\caption{\label{fig:The-original-beam and transformed}(a) The MIMO beam map
under the Cartesian coordinate system. (b) The transformed MIMO beam
map under the polar coordinate system, revealing Toeplitz and symmetric
structures that can be exploited for structured reconstruction.}
\end{figure}
Fig. \ref{fig:The-original-beam and transformed}(a) shows the MIMO
beam map $\bm{\mathcal{X}}$ represented in the Cartesian coordinate
system, where the beam pattern across different directions can be
directly visualized. However, in this form, the map does not explicitly
exhibit exploitable structural properties. By contrast, Fig. \ref{fig:The-original-beam and transformed}(b)
presents the direct beam component of $\bm{\mathcal{X}}$ in the polar
coordinate system, in which clear Toeplitz and symmetry structures
emerge. These structural characteristics are critical for enabling
reliable reconstruction of the MIMO beam map from sparse measurements
and are not readily observable in the Cartesian representation.
\begin{figure*}
\subfigure[]{\includegraphics[width=0.33\textwidth]{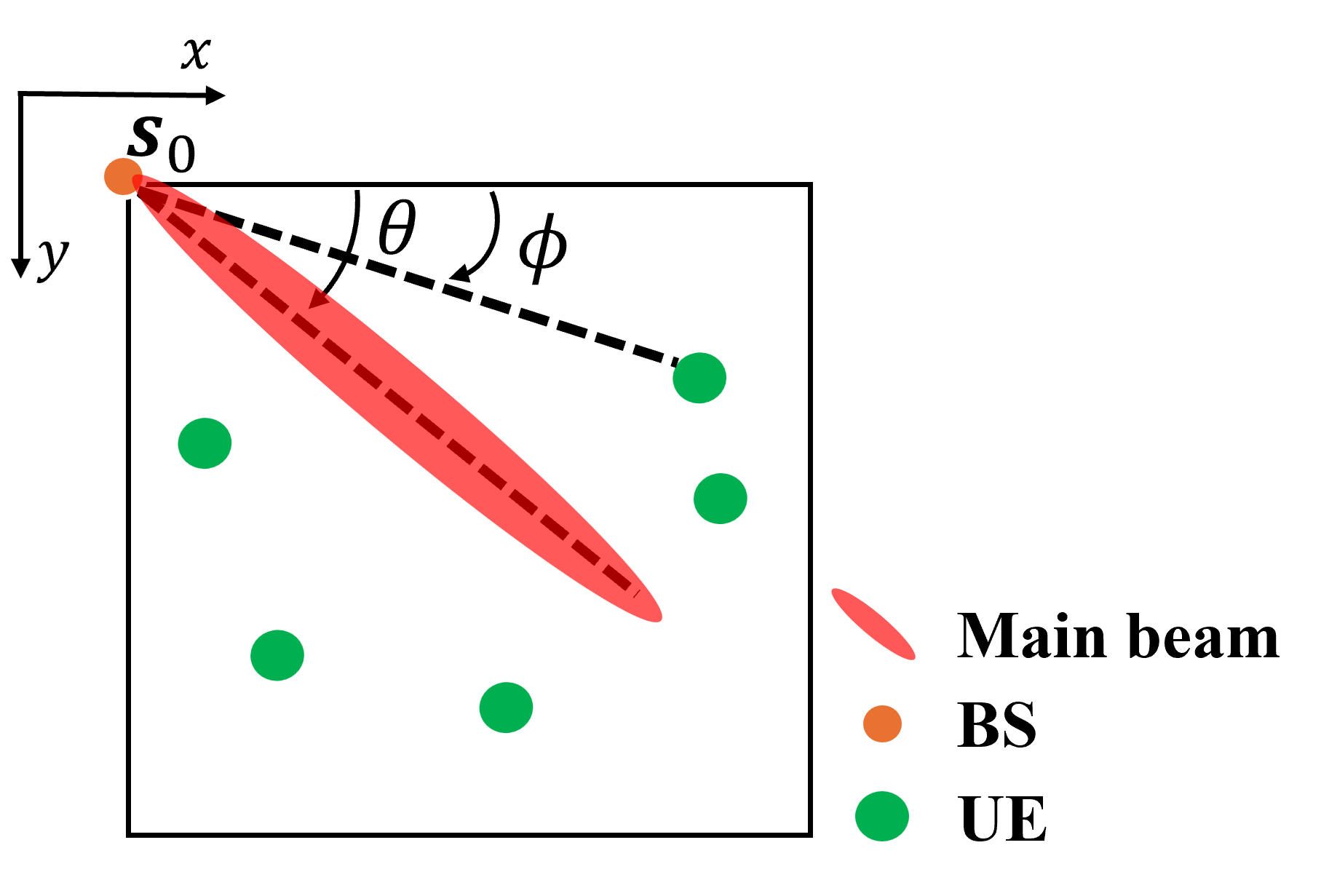}}
\subfigure[]{\includegraphics[width=0.33\textwidth]{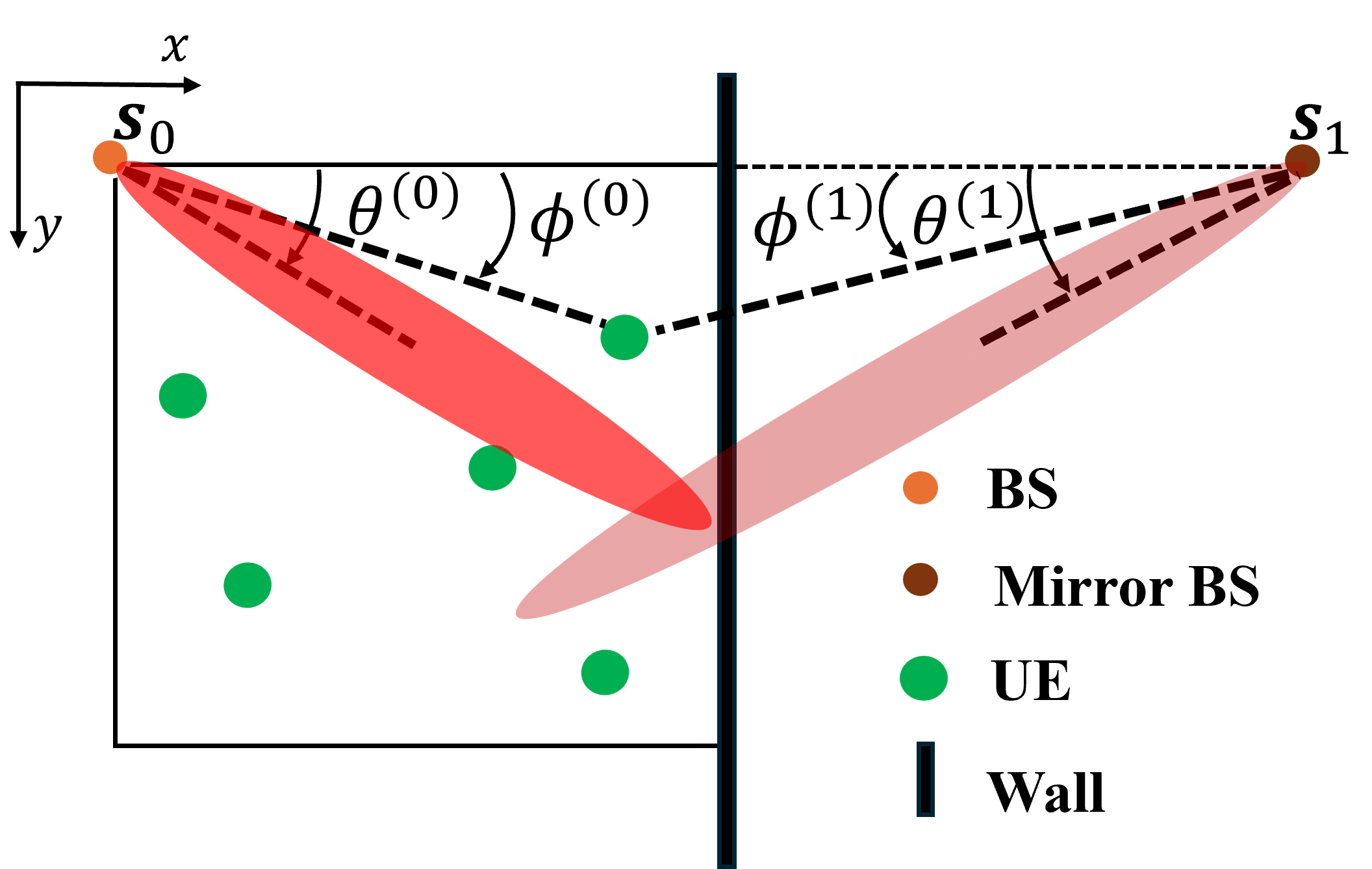}}\subfigure[]{\includegraphics[width=0.33\textwidth]{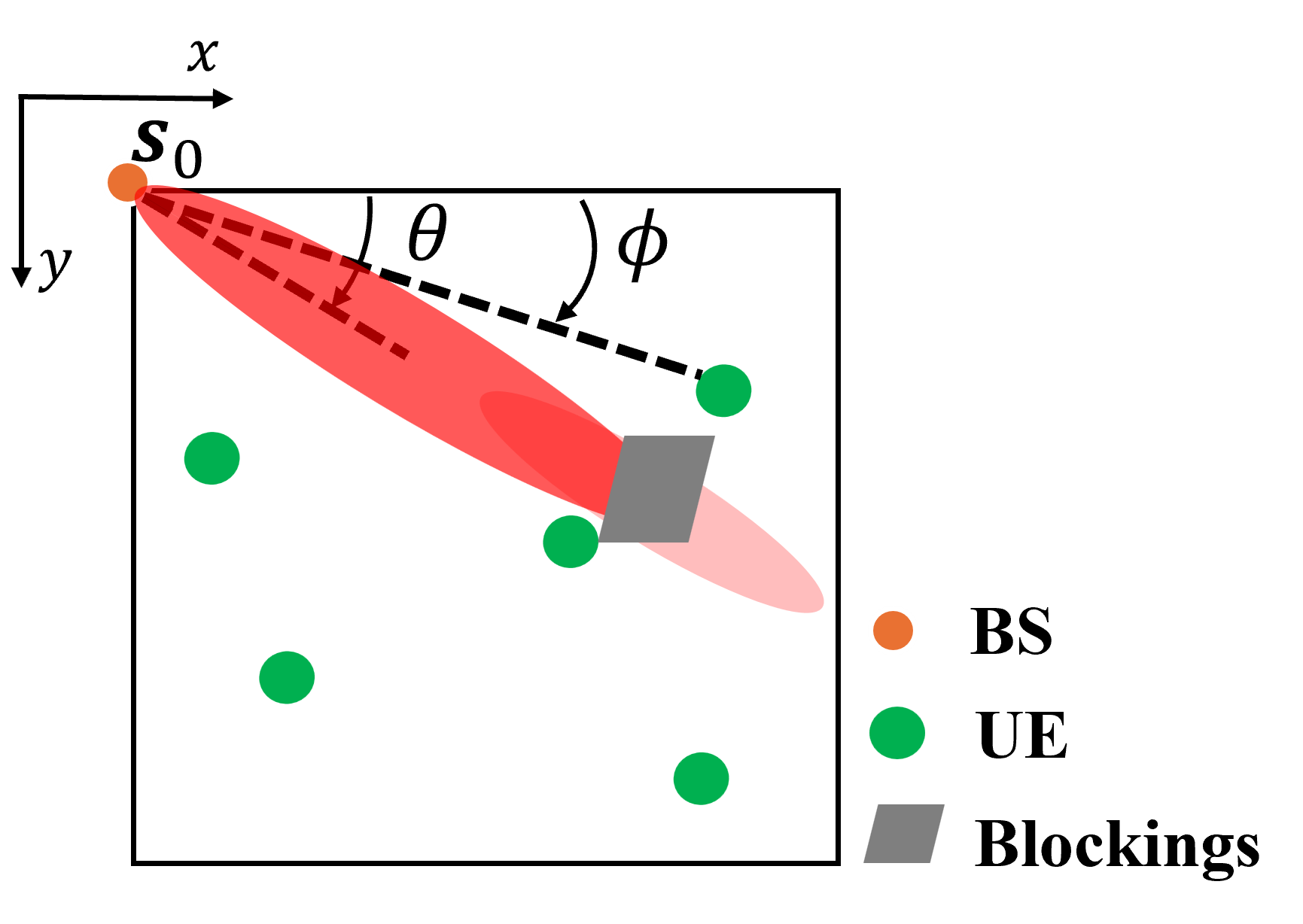}}

\caption{\label{fig:scenario(a)(b)(c)}Three common beam propagation scenarios:
(a) Pure LOS Region. (b) Coexistence of LOS and Reflection. (c) Coexistence
of LOS and Obstruction.}
\end{figure*}

\subsection{Tensor Model under the Presence of Reflection and Obstruction\label{subsec:Tensor-Model}}

With the presence of reflection and obstruction, we propose a rank-$R$
superposition tensor model to capture the dominant effect of the propagation
as
\begin{equation}
\bm{\mathcal{X}}=\sum_{r=1}^{R}\bm{G}_{r}\circ\bm{\rho}_{r}+\mathcal{\bm{E}}\label{eq:general_model}
\end{equation}
where each matrix $\bm{G}_{r}\in\mathbb{R}^{I\times J}$ represents
the beam-space gain pattern associated with the $r$-th propagation
condition, and $\bm{\rho}_{r}\in\mathbb{R}^{K}$ models its corresponding
distance-dependent attenuation. The rank $R$ reflects the number
of propagation conditions in the environment, typically characterized
by the coexistence of LOS, reflection, and obstruction.

The general model (\ref{eq:general_model}) can be adapted to capture
various propagation scenarios, as outlined in the following subsections.

\subsubsection{Pure LOS Region\label{subsec:Direct}}

In the pure LOS region where there is no strong reflection, as illustrated
in Fig.~\ref{fig:scenario(a)(b)(c)}(a), the tensor $\bm{\mathcal{X}}$
can be represented by a simple matrix-vector tensor decomposition
as in (\ref{eq:tensor_model}), where the rank in (\ref{eq:general_model})
reduces to $R=1$.

\subsubsection{Coexistence of LOS and Reflection\label{subsec:direct_reflect}}

For a region where LOS and reflection coexist, as illustrated in Fig.~\ref{fig:scenario(a)(b)(c)}(b),
similar to (\ref{eq:separability})\textendash (\ref{eq:redefine_r}),
the measurements corresponding to each propagation condition can be
written as
\[
\gamma_{m,i}^{(l)}=P(d_{k}^{(l)})\cdot\left|\bm{e}^{\mathrm{H}}(\theta_{i})\bm{e}(\phi_{j}^{(l)})\right|^{2}+\epsilon
\]
where $l=0$ denotes the LOS condition and $l=1$ denotes the reflection
condition. For each propagation condition, the measurement tensor
$\bm{\mathcal{X}}$ can be represented by a simple matrix-vector tensor
decomposition
\[
\bm{\mathcal{X}}^{(l)}=\bm{G}^{(l)}\circ\bm{\rho}^{(l)}+\mathcal{\bm{E}},\quad l=0,1.
\]

Since the reflection originates from a mirror BS $\bm{s}_{1}$, which
has an identical beam pattern as the BS $\bm{s}_{0}$, under the same
scope of $\phi_{j}^{(0)}$ and $\phi_{j}^{(1)}$, we have $\bm{G}^{(0)}=\bm{G}^{(1)}\triangleq\bm{G}$.
By concatenating the separated measurements along the distance dimension,
an augmented tensor can be constructed as
\begin{equation}
\bm{\mathcal{X}}=\text{cat}(\bm{\mathcal{X}}^{(0)},\bm{\mathcal{X}}^{(1)})+\mathcal{\bm{E}}=\bm{G}\circ\bm{\rho}+\mathcal{\bm{E}}\label{eq:direct_reflect}
\end{equation}
where $\bm{\rho}=[\bm{\rho}^{(0)};\bm{\rho}^{(1)}]\in\mathbb{R}^{2K}$.
This construction preserves the rank-1 structure of the tensor.

\subsubsection{Coexistence of LOS and Obstruction\label{subsec:direct_obstructed}}

In the case where the direct path is partially obstructed, as illustrated
in Fig.~\ref{fig:scenario(a)(b)(c)}(c), two distinct conditions
exist: a LOS region before the obstruction and an NLOS region after
it.

Consequently, the formulation in (\ref{eq:redefine_r}) is modified
as
\begin{equation}
\gamma_{m,i}=P_{0}(d_{k},\phi_{j})\cdot\left|\bm{e}^{\mathrm{H}}(\theta_{i})\bm{e}(\phi_{j})\right|^{2}+\epsilon\label{eq:redefine_r-1}
\end{equation}
where the power attenuation term $P_{0}$ now depends jointly on both
$d_{k}$ and $\phi_{j}$. For a fixed $d_{k}$, different $\phi_{j}$
may correspond to either LOS or NLOS regions, resulting in a lower
received power $\gamma_{m,i}$ when blockage exists.

To explicitly capture the angular dependence, we further express (\ref{eq:redefine_r-1})
as
\begin{equation}
\gamma_{m,i}=P(d_{k})\cdot\alpha(\phi)\left|\bm{e}^{\mathrm{H}}(\theta_{i})\bm{e}(\phi_{j})\right|^{2}+\epsilon\label{eq:redefine_r-1-1}
\end{equation}
where $\alpha(\phi)$ is a directional attenuation factor that varies
with $\phi$. Accordingly, for distance $d_{k}$ where both LOS and
NLOS regions coexist, we approximate the corresponding measurements
$\gamma_{m,i}$ using $\bm{G}_{2}\circ\bm{\rho}_{2}$, while the purely
LOS region is represented by $\bm{G}_{1}\circ\bm{\rho}_{1}$. This
leads to a rank-$R=2$ model in which the overall tensor is expressed
as a sum of two rank-one components
\begin{equation}
\bm{\mathcal{X}}=\bm{G}_{1}\circ\bm{\rho}_{1}+\bm{G}_{2}\circ\bm{\rho}_{2}+\mathcal{\bm{E}}.\label{eq:direct_block}
\end{equation}

A visual description of $\bm{G}_{1}$ and $\bm{G}_{2}$ is shown in
Fig.~\ref{fig:Visible-plot-of_G12}. The matrix $\bm{G}_{1}$ exhibits
a clear Toeplitz structure, whereas $\bm{G}_{2}$ shows a quasi-Toeplitz
pattern due to the presence of obstruction, where the darker region
corresponds to the obstructed region.

\begin{figure}
\subfigure{\includegraphics[width=0.5\columnwidth]{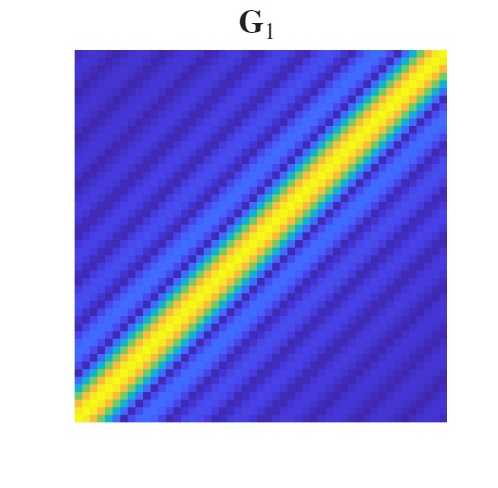}}\subfigure{\includegraphics[width=0.5\columnwidth]{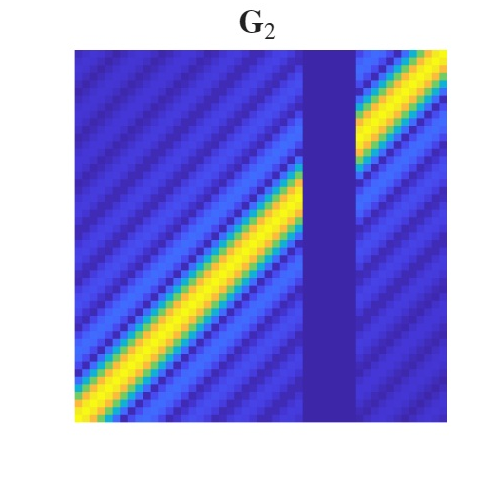}}\caption{\label{fig:Visible-plot-of_G12}Visible plots of $\bm{G}_{1}$ and
$\bm{G}_{2}$. $\bm{G}_{1}$ exhibits a Toeplitz structure corresponding
to the LOS region, while $\bm{G}_{2}$ displays a quasi-Toeplitz pattern
with a dark region indicating signal obstruction.}
\end{figure}

\subsubsection{Coexistence of LOS, Reflection, and Obstruction}

The scenario in Fig.~\ref{fig:block_reflect} involves the simultaneous
presence of LOS, reflection, and obstruction. In this case, the observed
beam pattern can be interpreted as the superposition of contributions
from the real BS $\bm{s}_{0}$ (without wall reflection) and its mirror
BS $\bm{s}_{1}$ (modeling the wall reflection).

When considering only the propagation originating from BS $\bm{s}_{0}$,
the setting is identical to that in Section \ref{subsec:direct_obstructed}.
Region $\bm{G}_{1}$ corresponds to the unobstructed LOS region, whereas
region $\bm{G}_{2}$ represents the obstruction region where the direct
path is blocked by the building.

For the propagation from the mirror BS $\bm{s}_{1}$, shown in Fig.~\ref{fig:scenario-blockage_reflection},
the LOS, reflection, and obstruction effects coexist. The two regions
$\bm{G}_{1}$ and $\bm{G}_{2}$ appear again, corresponding respectively
to the mirror-LOS area with distance to $\bm{s}_{1}$ less than $d_{1}$,
and the mirror-obstruction area with distance to $\bm{s}_{1}$ between
$d_{1}$ and $d_{2}$. In addition, a distinct region $\bm{G}_{3}$
emerges, where the reflected path is simultaneously blocked by both
the real building and its mirror image.

Following the same modeling principle as in Section \ref{subsec:direct_reflect},
this scenario can be represented by (\ref{eq:general_model}) with
$R=3$. 

Since the LOS and obstruction regions share the same spatial partitions
$\bm{G}_{1}$ and $\bm{G}_{2}$ as their mirror counterparts, the
corresponding attenuation components can be aggregated as
\begin{equation}
\bm{\rho}_{1}=\bm{\rho}_{1}^{(0)}+\bm{\rho}_{1}^{(1)},\qquad\bm{\rho}_{2}=\bm{\rho}_{2}^{(0)}+\bm{\rho}_{2}^{(1)}.
\end{equation}
In contrast, region $\bm{G}_{3}$ is affected solely by the mirror-induced
double obstruction and therefore contains only the corresponding attenuation
component $\bm{\rho}_{3}$.
\begin{figure}
\centering\includegraphics[width=0.8\columnwidth]{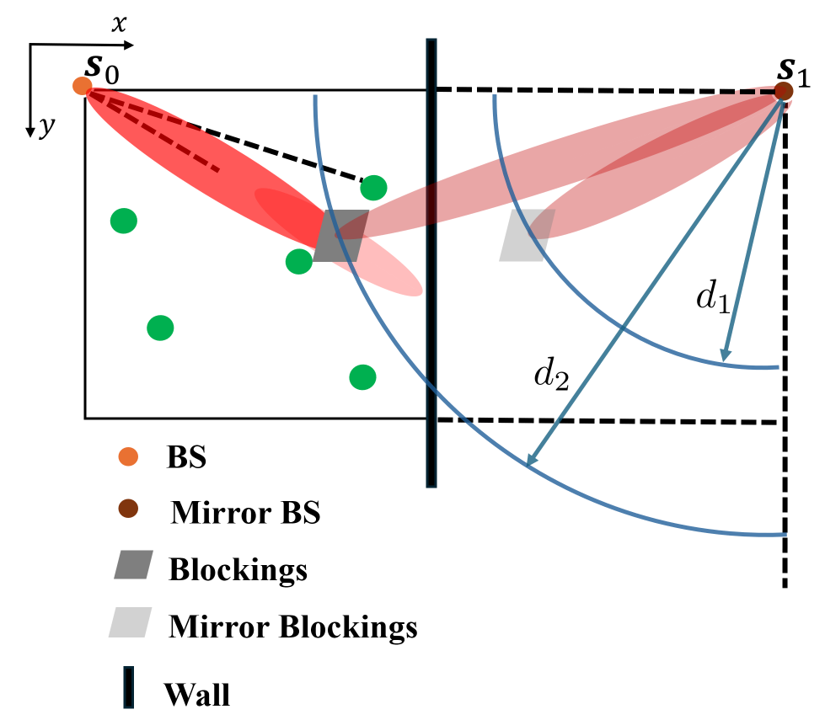}

\caption{\label{fig:scenario-blockage_reflection}Propagation from the mirror
BS $\bm{s}_{1}$ yields three distance regimes: (i) an LOS region
for $d\le d_{1}$; (ii) a single-obstruction region for $d_{1}<d\le d_{2}$;
and (iii) a double-obstruction region for $d>d_{2}$.}
\end{figure}

\section{Toeplitz-Structured Matrix-Vector Tensor Decomposition\label{sec:Toeplitz-Structured-Matrix-Vecto}}

In this section, we propose a matrix-vector tensor decomposition-based
optimization formulation to reconstruct the general scenario where
LOS, reflections, and obstructions coexist, as illustrated in Fig.~\ref{fig:block_reflect}.
The relatively simpler propagation scenarios discussed in Sections
\ref{subsec:Direct}\textendash \ref{subsec:direct_obstructed} can
be addressed in a similar manner as special cases of the proposed
formulation.

\subsection{Matrix-Vector Tensor Decomposition Formulation}

Building upon the coordinate transformation, the \ac{rss} measurements
in the polar domain can be expressed using a multi-term matrix-vector
tensor decomposition as given in (\ref{eq:general_model}). This representation
allows the reconstruction task to be formulated as a constrained optimization
problem, where the objective is to estimate the beam-space matrices
$\{\bm{G}_{r}\}$ and distance-domain attenuation vector $\{\bm{\rho}_{r}\}$
from the observed measurements while maintaining the underlying physical
characteristics.

To obtain a physically consistent and structurally regularized representation,
we exploit the near-Toeplitz structure of the beam-space gain matrices
$\bm{G}_{r}$. Rather than enforcing a strict Toeplitz constraint,
we incorporate a soft regularization term, since in environments where
both reflection and obstruction are present, the beam patterns only
exhibit approximate shift-invariance, as discussed in Section \ref{subsec:direct_obstructed}.
Moreover, because signal power decays with propagation distance, the
nonzero portion of each attenuation profile $\bm{\rho}_{r}$ is assumed
to be monotonically decreasing. These physical considerations directly
motivate the regularization terms and constraints adopted in the reconstruction
formulation.

Define $\bm{\mathcal{W}}\in\mathbb{R}^{I\times J\times K}$ as an
indication tensor such that
\[
[\bm{\mathcal{W}}]_{i,j,k}=\begin{cases}
1, & \text{if}\ (i,j,k)\in\Omega\\
0, & \text{otherwise}
\end{cases}
\]
where $\Omega$ denotes the index set of UE measurements in the polar
coordinate system. Then, the reconstruction of the MIMO beam map $\bm{\mathcal{X}}$
can be formulated as $\mathscr{P}$
\begin{align}
\mathscr{P}:\quad\underset{\{\bm{G}_{r}\},\{\bm{\rho}_{r}\}}{\text{minimize}} & \ \text{\ensuremath{\|}}\bm{\mathcal{W}}*(\bm{\mathcal{X}}-(\sum_{r=1}^{R}\bm{G}_{r}\circ\bm{\rho}_{r}))\|_{F}^{2}\label{eq:NLOS_BTD-1}\\
 & \quad+\sum_{r=1}^{R}\lambda_{r}\sum_{i=1}^{I-1}\sum_{j=1}^{J-1}|[\bm{G}_{r}]_{i,j}-[\bm{G}_{r}]_{i+1,j+1}|^{2}\label{eq:}\\
 & \qquad\qquad+\lambda\|[\bm{G}_{1},\cdots,\bm{G}_{R}]\|_{F}^{2}\\
\text{subject to} & \ \rho_{r,k}\geq0\ \forall k,r\label{eq:-1}\\
 & \ \sum_{r,k}\rho_{r,k}=1\label{eq:-2}\\
 & \ \sum_{r}\rho_{r,k}\geq\sum_{r}\rho_{r,k+1}\ \forall k=1,\cdots,K-1\label{eq:-3}\\
 & \ \rho_{1,k}\geq\rho_{1,k+1}\ \forall k=1,\cdots,K-1\label{eq:-4}\\
 & \ [\bm{G}_{r}]_{i,j}>0\ \forall i,j,r,\label{eq:-5}
\end{align}
where `$*$\textquoteright{} represents element-wise product. The
term $[\bm{G}_{r}]_{i,j}-[\bm{G}_{r}]_{i+1,j+1}$ serves as a Toeplitz
regularization, the Frobenius norm $\|[\bm{G}_{1},\cdots,\bm{G}_{R}]\|_{F}^{2}$
prevents overfitting by penalizing large magnitudes. Each power term
$\bm{\rho}_{r}$ is constrained to be nonnegative as in (\ref{eq:-1})
and normalized such that the total powers $\bm{\rho}_{r}$ satisfy
(\ref{eq:-2}) to ensure stable convergence. Since the index $k$
corresponds to the distance, the total power should decay with $k$
as in (\ref{eq:-3}). Moreover, (\ref{eq:-4}) guarantees that the
first component\textquoteright s power profile $\bm{\rho}_{1}$ decreases
smoothly along the distance dimension. The same monotonicity constraint
is not imposed on the other components $\bm{\rho}_{r}$, as their
profiles may contain zeros at small index $k$ due to the presence
of reflections and blockages.

To facilitate computation, let the matrix $\underbar{\ensuremath{\bm{X}}}\in\mathbb{R}^{(I\times J)\times K}$
and $\text{\underbar{\ensuremath{\bm{W}}}}\in\mathbb{R}^{(I\times J)\times K}$
represent the mode-3 unfolding of tensor $\bm{\mathcal{X}}$ and $\bm{\mathcal{W}}$,
respectively. The mode-3 unfolding is defined as
\[
\underbar{\ensuremath{\bm{X}}}=[\text{vec}(\bm{\mathcal{X}}_{:,:,1}),\ \text{vec}(\bm{\mathcal{X}}_{:,:,2}),\cdots,\text{vec}(\bm{\mathcal{X}}_{:,:,K})]
\]
where $\bm{\mathcal{X}}_{:,:,k}$ is the $k$th slice of the tensor
$\bm{\mathcal{X}}$. Additionally, define the matrix $\text{\underbar{\ensuremath{\bm{G}}}}=[\text{vec}(\bm{G}_{1}),\cdots,\text{vec}(\bm{G}_{R})]\in\mathbb{R}^{(I\times J)\times R}$
and $\bm{\varrho}=[\bm{\rho}_{1},\cdots,\bm{\rho}_{R}]\in\mathbb{R}^{K\times R}$.
Then, the problem $\mathscr{P}$ can be rewritten as
\begin{align}
\mathscr{P}':\quad\underset{\text{\underbar{\ensuremath{\bm{G}}}},\bm{\varrho}}{\text{minimize}} & \ \text{\ensuremath{\|}}\text{\ensuremath{\underbar{\ensuremath{\bm{W}}}}}*\text{\ensuremath{\underbar{\ensuremath{\bm{X}}}}}-\text{\ensuremath{\underbar{\ensuremath{\bm{W}}}}}*(\underbar{\ensuremath{\bm{G}}}\bm{\varrho}^{\text{T}})\|_{F}^{2}\label{eq:btd model-3-1-2}\\
 & \ \ +\sum_{r=1}^{R}\lambda_{r}\sum_{i=1}^{I-1}\sum_{j=1}^{J-1}|[\bm{G}_{r}]_{i,j}-[\bm{G}_{r}]_{i+1,j+1}|^{2}\nonumber \\
 & \qquad+\lambda\|\bm{\underbar{\ensuremath{\bm{G}}}}\|_{F}^{2}\nonumber \\
\text{subject to} & \quad\rho_{k\text{,}r}\geq0,\ \forall k=1,\cdots,K,r=1,\cdots,R\nonumber \\
 & \ \sum_{k,r}\rho_{k,r}=1\nonumber \\
 & \ \sum_{r}\rho_{k,r}\geq\sum_{r}\rho_{k+1,r}\ \forall k=1,\cdots,K-1\nonumber \\
 & \ \rho_{k,1}\geq\rho_{k+1,1}\ \forall k=1,\cdots,K-1.\nonumber 
\end{align}

\subsection{Alternating Updates under Structural Regularization\label{subsec:Alternating-Updates-under}}

The problem $\mathscr{P}'$ is separately convex with respect to $\underbar{\ensuremath{\bm{G}}}$
and $\bm{\varrho}$ when the other variable is fixed. We therefore
adopt an alternating minimization scheme to iteratively update $\underbar{\ensuremath{\bm{G}}}$
and $\bm{\varrho}$ until convergence.

\textbf{Update of} $\ensuremath{\underbar{\ensuremath{\bm{G}}}}$:
Given $\bm{\varrho}$, the subproblem with respect to $\underbar{\ensuremath{\bm{G}}}$
is formulated as
\begin{align}
\underset{\underbar{\ensuremath{\bm{G}}}}{\text{minimize}} & \ \text{\ensuremath{\|}}\text{\ensuremath{\underbar{\ensuremath{\bm{W}}}}}*\text{\ensuremath{\underbar{\ensuremath{\bm{X}}}}}-\text{\ensuremath{\underbar{\ensuremath{\bm{W}}}}}*(\underbar{\ensuremath{\bm{G}}}\bm{\varrho}^{\text{T}})\|_{F}^{2}\label{eq:btd model-3-1-1-1}\\
 & \qquad+\sum_{r=1}^{R}\lambda_{r}\sum_{i=1}^{I-1}\sum_{j=1}^{J-1}|[\bm{G}_{r}]_{i,j}-[\bm{G}_{r}]_{i+1,j+1}|^{2}\nonumber \\
 & \qquad\qquad\quad+\lambda\|\underbar{\ensuremath{\bm{G}}}\|_{F}^{2}.\nonumber 
\end{align}
This subproblem is convex since all the terms are convex. Therefore,
it can be efficiently solved using proximal gradient descent \cite{Ber:B97}.

\textbf{Update of} $\bm{\rho}$: Given $\underbar{\ensuremath{\bm{G}}}$,
the subproblem with respect to $\bm{\varrho}$ is given by
\begin{align}
\underset{\bm{\varrho}}{\text{minimize}} & \ \text{\ensuremath{\|}}\text{\ensuremath{\underbar{\ensuremath{\bm{W}}}}}*\text{\ensuremath{\underbar{\ensuremath{\bm{X}}}}}-\text{\ensuremath{\underbar{\ensuremath{\bm{W}}}}}*(\underbar{\ensuremath{\bm{G}}}\bm{\varrho}^{\text{T}})\|_{F}^{2}\label{eq:btd model-3-1-2}\\
\text{\text{subject to}} & \quad\rho_{k\text{,}r}\geq0,\ \forall k=1,\cdots,K,r=1,\cdots,R\nonumber \\
 & \ \sum_{k,r}\rho_{k,r}=1\nonumber \\
 & \ \sum_{r}\rho_{k,r}\geq\sum_{r}\rho_{k+1,r}\ \forall k=1,\cdots,K-1\nonumber \\
 & \ \rho_{k,1}\geq\rho_{k+1,1}\ \forall k=1,\cdots,K-1.\nonumber 
\end{align}
This subproblem is convex with respect to $\bm{\varrho}$. Since the
objective function is quadratic and all constraints are linear, it
constitutes a standard quadratic program that can be efficiently solved
using interior-point methods \cite{PotWri:J00}.

Finally, the reconstructed tensor is obtained as $\hat{\bm{\mathcal{X}}}=\sum_{r=1}^{R}\hat{\bm{G}}_{r}\circ\hat{\bm{\rho}}_{r}$.

\section{Simplified Model for the Pure LOS Region\label{sec:Scenario-Specific-Simplified-Mod}}

In this section, we show that in the pure LOS region, enforcing the
symmetric Toeplitz structure as hard constraints leads to a simplified
formulation of problem $\mathscr{P}$, with substantially reduced
computational complexity compared to the regularized formulation.

\subsection{Simplified Model}

For pure LOS region illustrated in Fig. \ref{fig:scenario(a)(b)(c)}(a),
the matrix-vector tensor decomposition problem $\mathscr{P}'$ can
be simplified and formulated as
\begin{align*}
\mathscr{P}'':\quad\underset{\bm{G},\bm{\rho}}{\text{minimize}} & \ \text{\ensuremath{\|}}\text{\ensuremath{\underbar{\ensuremath{\bm{W}}}}}*\text{\ensuremath{\underbar{\ensuremath{\bm{X}}}}}-\text{\ensuremath{\underbar{\ensuremath{\bm{W}}}}}*(\text{vec}(\bm{G})\bm{\rho}^{\text{T}})\|_{F}^{2}\\
 & \quad\quad+\lambda_{1}\sum_{i=1}^{I-1}\sum_{j=1}^{J-1}|[\bm{G}]_{i,j}-[\bm{G}]{}_{i+1,j+1}|^{2}\\
 & \quad\quad\quad+\lambda_{2}\sum_{i=1}^{I}\sum_{j=1}^{J}|[\bm{G}]_{i,j}-[\bm{G}]{}_{j,i}|^{2}\\
\text{subject to} & \ \rho_{k}>\rho_{k+1}>0,\ \forall k=1,\cdots,K-1\\
 & \ \sum_{k}\rho_{k}=1.
\end{align*}
 This formulation can be solved using a similar alternating optimization
framework as that adopted for the general problem $\mathscr{P}'$.

Since there is only a LOS region, the Toeplitz and symmetric properties
can be imposed as hard constraints rather than through regularization.
Under these structural constraints, the matrix $\bm{G}$ can be fully
characterized by a single row vector, thereby converting the reconstruction
task into a constrained least-squares problem. This representation
remains effective even under severe measurement sparsity.

By enforcing Toeplitz and symmetry constraints, the matrix-vector
tensor decomposition for the LOS region can be formulated as
\begin{align}
\mathscr{P}''':\underset{\bm{G},\bm{\rho}}{\text{minimize}} & \ \text{\ensuremath{\|}}\text{\ensuremath{\underbar{\ensuremath{\bm{W}}}}}*\text{\ensuremath{\underbar{\ensuremath{\bm{X}}}}}-\text{\ensuremath{\underbar{\ensuremath{\bm{W}}}}}*(\text{vec}(\bm{G})\bm{\rho}^{\text{T}})\|_{F}^{2}\label{eq:btd toeplitz constraint}\\
\text{subject to} & \ [\bm{G}]_{i,j}=[\bm{G}]_{j,i},\ [\bm{G}]_{i,j}=[\bm{G}]_{i+1,j+1}\nonumber \\
 & \ i+1\leq I,j+1\leq J\nonumber \\
 & \ \rho_{k}>\rho_{k+1}>0,\ \forall k=1,\cdots,K-1\nonumber \\
 & \ \sum_{k}\rho_{k}=1.\nonumber 
\end{align}

We employ an alternating minimization scheme to iteratively solve
for $\bm{G}$ and $\bm{\rho}$, as detailed below.

\textbf{Update of} $\bm{\rho}$: Given $\bm{G}$, we recall that
$\text{\underbar{\ensuremath{\bm{X}}}\ensuremath{\in\mathbb{R}^{(I\times J)\times K}}}$
and $\text{\underbar{\ensuremath{\bm{W}}}}\in\mathbb{R}^{(I\times J)\times K}$
are the mode-3 unfoldings of $\bm{\mathcal{X}}$ and $\bm{\mathcal{W}}$
respectively.

Then, the problem (\ref{eq:btd toeplitz constraint}) becomes
\begin{align}
\underset{\bm{\rho}}{\text{minimize}} & \ \text{\ensuremath{\|}}\text{\ensuremath{\underbar{\ensuremath{\bm{W}}}}}*\text{\ensuremath{\underbar{\ensuremath{\bm{X}}}}}-\text{\ensuremath{\underbar{\ensuremath{\bm{W}}}}}*(\text{vec}(\bm{G})\bm{\rho}^{\text{T}})\|_{F}^{2}\label{eq:btd toeplitz constraint rho}\\
\text{\text{subject to}} & \quad\bm{A\rho}>0,\ \bm{\rho}>0,\ \bm{1}^{\text{T}}\bm{\rho}=1,\nonumber 
\end{align}
where $\bm{1}\in\mathbb{R}^{K}$ is all-ones vector, and $\bm{A}\in\mathbb{R}^{(K-1)\times K}$
is a constraint matrix defined as
\[
\bm{A}=\left[\begin{array}{ccccc}
1 & -1 & 0 & \cdots & 0\\
0 & 1 & -1 & \cdots & 0\\
\vdots & \vdots & \vdots & \ddots & \vdots\\
0 & 0 & \cdots & 1 & -1
\end{array}\right].
\]
Using the identity $\text{\ensuremath{\underbar{\ensuremath{\bm{W}}}}}*(\text{vec}(\bm{G})\bm{\rho}^{\text{T}})=\text{\text{diag}}(\text{vec}(\bm{G}))\text{\ensuremath{\underbar{\ensuremath{\bm{W}}}}}\text{diag}(\bm{\rho})$,
the objective function in (\ref{eq:btd toeplitz constraint rho})
can be expanded as
\begin{align*}
 & \text{\ensuremath{\|}}\text{\ensuremath{\underbar{\ensuremath{\bm{W}}}}}*\text{\ensuremath{\underbar{\ensuremath{\bm{X}}}}}-\text{\text{diag}}(\text{vec}(\bm{G}))\text{\ensuremath{\underbar{\ensuremath{\bm{W}}}}}\text{diag}(\bm{\rho})\|_{F}^{2}\\
= & \sum_{i=1}^{I\times J}\sum_{k=1}^{K}([\text{\ensuremath{\underbar{\ensuremath{\bm{W}}}}}]_{i,k}[\text{\underbar{\ensuremath{\bm{X}}}}]_{i,k}-\text{vec}(\bm{G})_{i}[\text{\underbar{\ensuremath{\bm{W}}}]}_{i,k}\rho_{k})^{2}.
\end{align*}
This problem can be further reformulated into the standard quadratic
programming form
\begin{align}
\underset{\bm{\rho}}{\text{minimize}} & \ \bm{\rho}^{\text{T}}\bm{H}\bm{\rho}+\bm{c}^{\text{T}}\bm{\rho}\label{eq:btd toeplitz constraint rho-1-1}\\
\text{subject to} & \quad\bm{A\rho}>0,\ \bm{\rho}>0,\ \bm{1}^{\text{T}}\bm{\rho}=1,\nonumber 
\end{align}
where $\bm{H}$ is diagonal matrix with $[\bm{H}]_{k,k}=2\sum_{i=1}^{I\times J}(\text{vec}(\bm{G})_{i}[\text{\underbar{\ensuremath{\bm{W}}}}]_{i,k})^{2}$,
and $\bm{c}\in\mathbb{R}^{K}$ is given by $c_{k}=-2\sum_{i=1}^{I\times J}[\text{\ensuremath{\underbar{\ensuremath{\bm{W}}}}}]_{i,k}[\text{\underbar{\ensuremath{\bm{X}}}}]_{i,k}\text{vec}(\bm{G})_{i}$.

This quadratic programming problem with a quadratic objective function
and linear inequality constraints can be solved via interior-point
method \cite{PotWri:J00}.

\textbf{Update of} $\bm{G}$: Given $\bm{\text{\ensuremath{\rho}}}$,
problem (\ref{eq:btd toeplitz constraint}) becomes
\begin{align}
\underset{\bm{G}}{\text{minimize}} & \ \text{\ensuremath{\|}}\bm{\mathcal{W}}*(\bm{\mathcal{X}}-(\bm{G}\circ\bm{\rho}))\|_{F}^{2}\label{eq:btd model-5}\\
\text{\text{subject to}} & \quad[\bm{G}]_{i,j}=[\bm{G}]_{j,i},\ [\bm{G}]_{i,j}=[\bm{G}]_{i+1,j+1}\nonumber \\
 & \quad i+1\leq I,j+1\leq J.\nonumber 
\end{align}

Under these hard constraints, matrix $\bm{G}$ is a symmetric Toeplitz
matrix. Recall that we consider a square grid with $I=J\triangleq N$.
Matrix $\bm{G}$ is fully determined by its first row $\bm{g}=[g_{0},g_{1},\cdots,g_{N-1}]^{\text{T}}$
and $[\bm{G}]_{i,j}=g_{|i-j|}$. Let $\bm{T}\in\mathbb{R}^{N^{2}\times N}$
denote the fixed transformation matrix such that $\text{vec}(\bm{G})=\bm{T}\bm{g}\in\mathbb{R}^{N^{2}}$.
Substituting this relation into the unfolded objective function (\ref{eq:btd model-5})
gives
\begin{align}
\underset{\bm{g}}{\text{minimize}} & \ \text{\ensuremath{\|}}\text{\ensuremath{\underbar{\ensuremath{\bm{W}}}}}*\text{\ensuremath{\underbar{\ensuremath{\bm{X}}}}}-\text{\text{diag}}(\bm{T}\bm{g})\text{\ensuremath{\underbar{\ensuremath{\bm{W}}}}}\text{diag}(\bm{\rho})\|_{F}^{2}.\label{eq:btd model-6-1}
\end{align}
Define $\bm{y}=\text{vec}(\text{\ensuremath{\underbar{\ensuremath{\bm{W}}}}}*\text{\ensuremath{\underbar{\ensuremath{\bm{X}}}}})$,
and construct a matrix $\bm{D}\in\mathbb{R}^{(N^{2}\times K)\times N}$such
that $[\bm{D}]_{(i-1)\times K+k,j}=T_{ij}[\text{\ensuremath{\underbar{\ensuremath{\bm{W}}}}}]_{i,k}\rho_{k}$.
Then, (\ref{eq:btd model-6-1}) reduces to the standard least-squares
problem
\begin{equation}
\underset{\bm{g}}{\text{minimize}}\ \|\bm{y}-\bm{D}\bm{g}\|_{2}^{2}.\label{eq:least square}
\end{equation}
This is an unconstrained strictly convex problem. Setting the first-order
derivative to zero yields the closed-form solution
\begin{equation}
\hat{\bm{g}}=(\bm{D}^{\text{T}}\bm{D})^{-1}\bm{D}^{\text{T}}\bm{y}.\label{eq:update_g}
\end{equation}
The matrix $\bm{G}$ is then reconstructed as $[\bm{G}]_{i,j}=\hat{g}_{|i-j|}.$

Finally, the reconstructed MIMO beam map is obtained as $\bm{\mathcal{X}}=\hat{\bm{G}}\circ\hat{\bm{\bm{\rho}}}$.

The experimental comparison between the regularized and hard-constrained
formulations is presented in Fig.~\ref{fig:Comparasion-between-Toeplitz},
and discussed in detail in Section \ref{subsec:Comparasion-between-Toeplitz}.

\subsection{Complexity Analysis}

In this subsection, we compare the computational complexity of the
\emph{regularized} formulation in $\mathscr{P}''$ and the \emph{hard-constrained}
formulation in $\mathscr{P}'''$ for the pure LOS case.

Both $\mathscr{P}''$ and $\mathscr{P}'''$ are solved by alternating
minimization between vector $\bm{\rho}$ and matrix $\bm{G}$. Let
$T_{\mathrm{alt}}$ denote the number of outer alternating iterations.
In addition, let $T_{\mathrm{ip}}$ denote the number of Newton steps
used by the interior-point solver for (\ref{eq:btd toeplitz constraint rho-1-1}),
and let $T_{\mathrm{pg}}$ denote the number of proximal-gradient
iterations used to update $\bm{G}$ in the regularized method.

\subsubsection{Update of $\bm{\rho}$ (Common to $\mathscr{P}''$ and $\mathscr{P}'''$)}

Given $\bm{G}$, updating $\bm{\rho}$ yields a $K$-variable quadratic
program with linear constraints. The coefficients $\bm{H}$ and $\bm{c}$
in (\ref{eq:btd toeplitz constraint rho}) can be formed by accumulating
only over observed entries, which costs $\mathcal{O}(|\Omega|)$.
Solving the quadratic program via an interior-point method typically
costs $\mathcal{O}(T_{\mathrm{ip}}K^{3})$ \cite{PotWri:J00}. Hence,
the per-outer-iteration cost of the $\bm{\rho}$-update (\ref{eq:btd toeplitz constraint rho-1-1})
is $\mathcal{O}\!\left(|\Omega|+T_{\mathrm{ip}}K^{3}\right).$

\subsubsection{Update of $\bm{G}$}

In $\mathscr{P}''$, $\bm{G}\in\mathbb{R}^{N\times N}$ contains $N^{2}$
free variables and is estimated by minimizing a data-fitting term
plus non-smooth structural regularizers. As described in Section \ref{subsec:Alternating-Updates-under},
this subproblem can be solved by proximal-gradient descent. In each
proximal-gradient iteration, computing the gradient of the quadratic
data-fitting term can be implemented over the observed set, at a cost
of $\mathcal{O}(|\Omega|)$, while evaluating the proximal mapping
associated with the structural regularizers requires operations over
all $N^{2}$ entries (cost $\mathcal{O}(N^{2})$ per iteration). Therefore,
the per-outer-iteration complexity of the $\bm{G}$-update in $\mathscr{P}''$
is $\mathcal{O}\!\left(T_{\mathrm{pg}}(|\Omega|+N^{2})\right).$

In $\mathscr{P}'''$, symmetry and Toeplitz are imposed as hard constraints,
so $\bm{G}$ is fully determined by its first row $\bm{g}=[g_{0},g_{1},\ldots,g_{N-1}]^{\text{T}}\in\mathbb{R}^{N}$
with $[\bm{G}]_{ij}=g_{|i-j|}$, reducing the degrees of freedom from
$N^{2}$ to $N$. Given $\bm{\rho}$, the $\bm{g}$-update reduces
to a least-squares problem (\ref{eq:update_g}). Since the value at
any position $(i,j)$ depends solely on the index difference $|i-j|$,
each observation contributes to the estimation of a unique coefficient
$g_{|i-j|}$ without coupling with others. This structural orthogonality
renders the Gram matrix $\bm{D}^{\text{T}}\bm{D}$ diagonal, allowing
$\bm{g}$ to be computed efficiently by simple grouped accumulations
over $\Omega$. Thus, the complexity is $\mathcal{O}(|\Omega|+N).$

\subsubsection{Overall complexity}

Combining the above, the overall complexities are summarized as
\begin{align}
\text{\ensuremath{\mathscr{P}''}:}\quad & \mathcal{O}\!\left(T_{\mathrm{alt}}\Big(|\Omega|+T_{\mathrm{ip}}K^{3}+T_{\mathrm{pg}}(|\Omega|+N^{2})\Big)\right),\label{eq:complexity_reg}\\
\ensuremath{\mathscr{P}'''}:\quad & \mathcal{O}\!\left(T_{\mathrm{alt}}\Big(|\Omega|+T_{\mathrm{ip}}K^{3}+(|\Omega|+N)\Big)\right).\label{eq:complexity_hard}
\end{align}

In sparse-sampling regimes, the dominant difference stems from the
spatial update: $\mathscr{P}''$ optimizes over $N^{2}$ unknowns
and requires iterative proximal updates with per-iteration cost scaling
as $|\Omega|+N^{2}$, whereas $\mathscr{P}'''$ reduces the spatial
degrees of freedom to $N$ and admits a least-squares update that
scales primarily with the number of observations $|\Omega|$.

\section{Numerical Results\label{sec:Numerical-Results}}

In this section, we use the models in (\ref{eq:measurement_model_dir})
and (\ref{eq:measurement_model_ref}) to simulate the MIMO beam map
over an $L\times L$ area with $L=48$ meters. The BS is located at
$\bm{s}_{0}=(0,0)$, and we set the number of antennas to $N_{t}=16$.
The large-scale fading coefficients are chosen as $\alpha_{0}=d^{\eta}$
and $\alpha_{1}=d^{\eta}/2$ where $d=||\bm{z}-\bm{s}_{0}\|_{2}$.
The path-loss exponent is set to $\eta=-2$ for unobstructed propagation
and $\eta=-6$ when the path is blocked. The noise terms $n^{(0)}$
and $n^{(1)}$ are modeled with standard deviations $\sigma_{0}=3$dB
and $\sigma_{1}=1$dB. We choose the dimension of $\bm{\mathcal{X}}$
to be $I=J=46$ and $K=40$. The beam angles $\{\theta_{i}\}_{i=1}^{I}$
are generated such that $\sin\theta_{i}$ is uniformly spaced over
$[\sin\theta_{1},\,\sin\theta_{I}]$, where $\theta_{1}=0.0208$ and
$\theta_{I}=1.55$. The spatial angle $\phi_{j}$ is chosen to be
the same as $\theta_{i}$ to satisfy Lemma \ref{lem:symemtric}.

As illustrated in Fig. \ref{fig:block_reflect}, a wall is assumed
to exist along the right-hand side of the region, connecting the vertices
$(48,0)$ and $(48,48)$. The corresponding mirror BS for the reflected
path is therefore located at $\bm{s}_{1}=(96,0)$. UE locations are
drawn uniformly at random in the $L\times L$ area to obtain RSS $\gamma_{m,i}^{(0)}$
and $\gamma_{m,i}^{(1)}$.

We first transform the measurements $\{(\theta_{i},\bm{z}_{m},\gamma_{m,i})\}$
into the polar coordinate system $(\phi,d)$ via (\ref{eq:aggregation}).
The proposed Toeplitz-structured matrix-vector tensor decomposition
method is then applied to reconstruct the MIMO beam map $\bm{\mathcal{X}}$
under the polar coordinate system. After decomposition, $\bm{\mathcal{X}}$
needs to be mapped back to the Cartesian coordinate system to obtain
MIMO beam map $\bm{\mathcal{Y}}\in\mathbb{R}^{L\times L\times I}$
to convey direct beam information. Each entry $[\bm{\mathcal{X}}]_{i,j,k}$
indexed by $(\theta_{i},\phi_{j},d_{k})$, is converted to a corresponding
Cartesian location $\bm{z}_{m_{0}}$, where $m_{0}=1,\cdots,M_{0}$,
and $M_{0}=K\times J$. This is because under a fixed beam $\theta_{i}$,
the number of values in $\mathcal{X}_{i,:,:}$ is $K\times J$. Denote
$\gamma_{m_{0}}$ as a one-to-one mapping to $\mathcal{X}_{i,j,k}$
at $\bm{z}_{m_{0}}$. Then, the set of measurements becomes $\{(\theta_{i},\bm{z}_{m_{0}},\gamma_{m_{0}})\}$.
It is noted that under a fixed $\theta_{i}$, there is a beam-space
matrix $\mathcal{Y}_{i,:,:}\in\mathbb{R}^{L\times L}$. We propose
to interpolate such matrix $\mathcal{Y}_{i,:,:}$ for each $\theta_{i}$
based on the measurements $\{(\theta_{i},\bm{z}_{m_{0}},\gamma_{m_{0}})\}$
using \ac{tps}. The reconstruction performance is evaluated using
the \ac{nmse}, defined as $\|\bm{\mathcal{Y}}-\hat{\bm{\mathcal{Y}}}\|_{F}^{2}/\|\bm{\mathcal{Y}}\|_{F}^{2}$.

We compare the proposed Toeplitz-structured matrix-vector tensor decomposition
approach with the following baseline methods:
\begin{itemize}
\item Baseline 1: The \ac{knn} method, with $k=3$.
\item Baseline 2: The \ac{tps} \cite{JuaGonGeo:J11}, applied independently
for each beam $\theta_{i}$ to construct the corresponding 2D beam
map from the measurements $\{(\theta_{i},\bm{z}_{m},\gamma_{m,i})\}$.
\item Baseline 3: The traditional \ac{btd} \cite{ZhaFuWan:J20} method
without imposing Toeplitz structure.
\end{itemize}
Note that low-rank tensor completion \cite{LiuMusWon:J13} exhibits
significantly degraded performance under highly sparse observations
and is therefore excluded from comparison. Baselines 1 and 2 are purely
2D interpolation methods and reconstruct the MIMO beam map for each
beam angle $\theta_{i}$ independently, without exploiting any structural
correlations.

\subsection{Simulation for Pure LOS Scenario\label{subsec:Simulation-for-Direction}}

In this subsection, we show the simulation results under a LOS scenario
in Fig.~\ref{fig:scenario(a)(b)(c)}(a).

\subsubsection{Comparison between Toeplitz Constraint and Toeplitz Regularization\label{subsec:Comparasion-between-Toeplitz}}

We compare the \ac{nmse} performance of reconstructing $\bm{\mathcal{X}}$
using the Toeplitz constraint formulation in problem $\mathscr{P}'''$
and the Toeplitz regularization formulation in problem $\mathscr{P}''$.
Fig.~\ref{fig:Comparasion-between-Toeplitz} reports the NMSE under
different sampling ratios, where the Toeplitz constraint method consistently
achieves lower NMSE than the regularization-based counterpart across
all sampling rates. The improvement is more pronounced in the low-sampling
regime, indicating that directly enforcing the Toeplitz structure
is particularly beneficial when measurements are sparse.

To further evaluate robustness, Fig. \ref{fig:Comparasion-between-Toeplitz-1}
shows the NMSE under different noise levels from $0$ to $4$ dB.
As the noise increases, the NMSE of both methods degrades, but the
Toeplitz constraint method maintains a clear performance advantage
throughout the tested SNR range. Overall, these results suggest that
when the Toeplitz prior accurately matches the underlying signal structure
in the pure LOS case, hard structural enforcement yields more reliable
reconstructions than soft regularization, and remains consistently
superior under both sampling ratios variation and noise perturbation.

\begin{figure}
\includegraphics{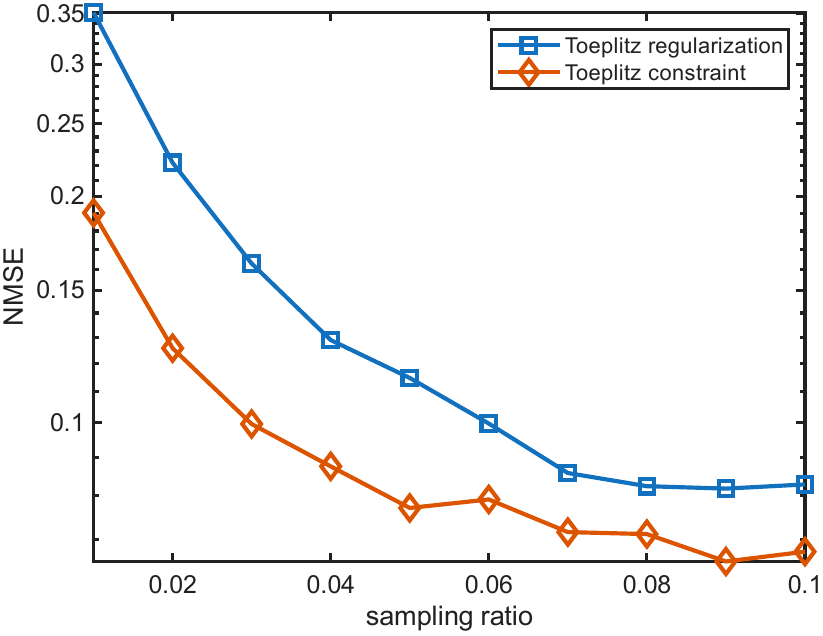}

\caption{\label{fig:Comparasion-between-Toeplitz}NMSE comparison between the
Toeplitz constraint and Toeplitz regularization methods under different
sampling ratios. The constraint method performs better than using
Toeplitz as regularization.}
\end{figure}

\begin{figure}
\includegraphics{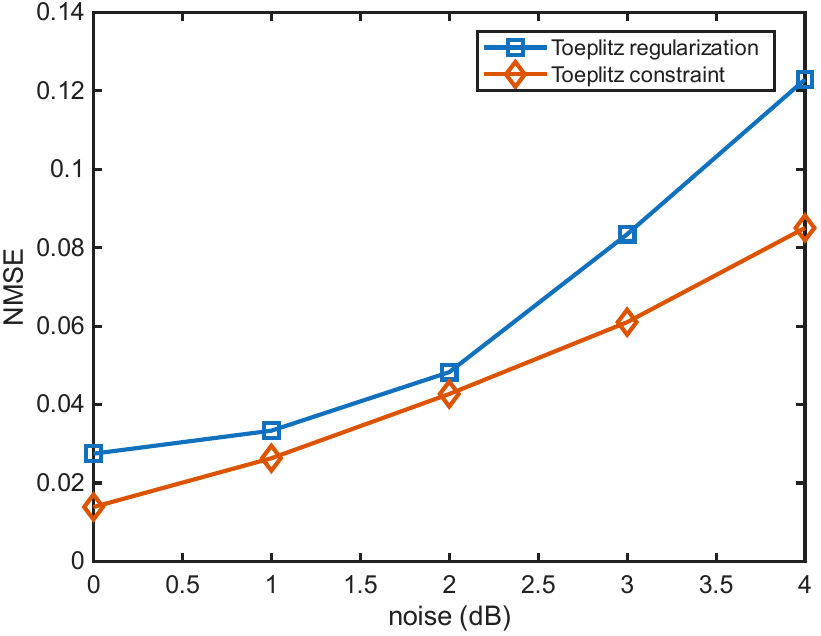}

\caption{\label{fig:Comparasion-between-Toeplitz-1}NMSE comparison between
the Toeplitz constraint and Toeplitz regularization methods under
different noise. The constraint method performs better than using
Toeplitz as regularization.}
\end{figure}

\subsubsection{The Performance under Different Sampling Ratios}

We evaluate the performance of the proposed method for reconstructing
$\bm{\mathcal{Y}}$ under sampling ratios $r_{s}=0.04-0.1$. The NMSE
results are presented in Fig.~\ref{fig:LOS sampling ratio }. The
proposed method consistently outperforms all baseline approaches,
achieving more than a $20$\% improvement at low sampling ratios.
The traditional BTD method fails to recover the MIMO beam map because
it does not exploit the Toeplitz structure. The TPS and KNN interpolation
methods also perform poorly, as they cannot capture the underlying
beam pattern. 
\begin{figure}
\includegraphics{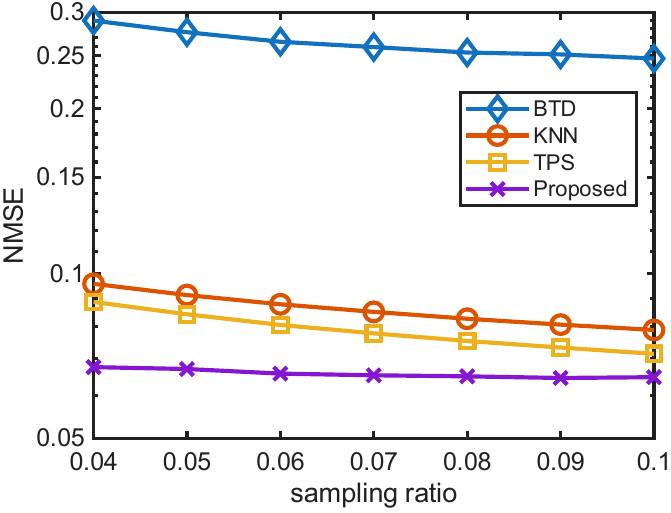}\caption{\label{fig:LOS sampling ratio }Reconstruction NMSE of $\bm{\mathcal{Y}}$
of LOS scenario under different sampling ratios}
\end{figure}

To further illustrate the reconstruction quality, we set the sampling
ratio to $r_{s}=0.1$ and show the visual results in Fig.~\ref{fig:Visual-plot-of direct}.
The proposed method accurately reconstructs the MIMO beam map by leveraging
the Toeplitz structure, while TPS and KNN can only provide coarse
approximations and fail to recover the fine beam features. 
\begin{figure}
\includegraphics{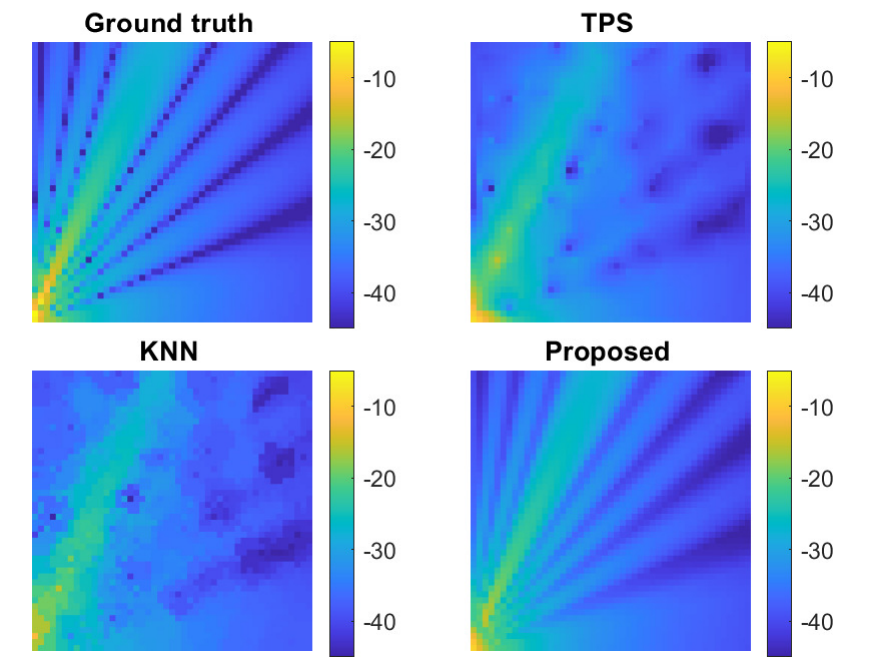}\caption{\label{fig:Visual-plot-of direct}Visual comparison of the reconstructed
direct beam map at sampling ratio $r_{s}=0.1$. The proposed method
successfully captures the beam pattern, while TPS and KNN produce
blurred and inaccurate reconstructions.}
\end{figure}

\subsection{Simulation for Coexistence of LOS and Reflection\label{subsec:direct_ref}}

In this subsection, we examine the reconstruction performance when
LOS and reflection coexist.

\subsubsection{Benefits of the Matrix-Vector Tensor Decomposition for Reflected-Beam
Reconstruction}

We demonstrate that the measurements from the direct-path beam can
improve the reconstruction of the reflected-path beam through the
proposed matrix-vector tensor formulation in (\ref{eq:direct_reflect}).
In Fig.~\ref{fig:The-BTD-tensor benefits relfect beam}, the method
\textquotedblleft Reflect only\textquotedblright{} uses only the reflection
measurements to reconstruct the MIMO beam map associated with the
mirror BS $\bm{s}_{1}$. In contrast, the method \textquotedblleft Reflect
with direct\textquotedblright{} jointly incorporates both the LOS
and reflection measurements using the tensor formulation introduced
in (\ref{eq:direct_reflect}).

As shown in Fig.~\ref{fig:The-BTD-tensor benefits relfect beam},
the joint formulation achieves more than a $10$\% improvement in
NMSE. This gain arises because the LOS and reflected beams share the
same intrinsic energy distribution $\bm{G}$. Therefore, the LOS measurements
provide additional information that helps estimate $\bm{G}$ more
accurately, which in turn benefits the reconstruction of the reflected
beam.

\begin{figure}
\includegraphics{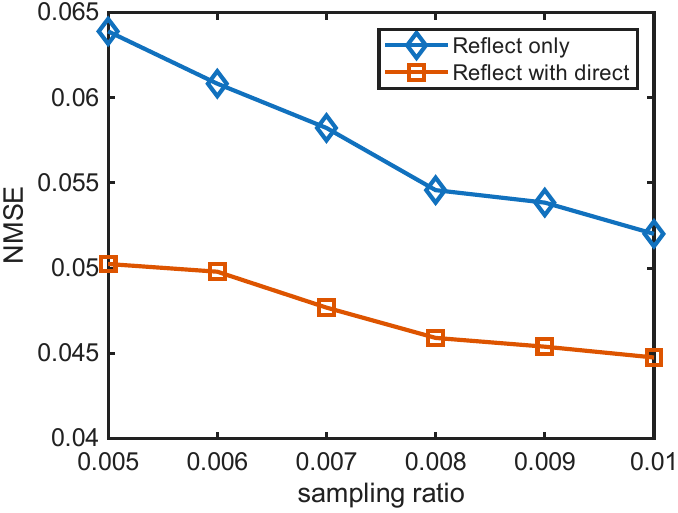}\caption{\label{fig:The-BTD-tensor benefits relfect beam}Reconstruction NMSE
comparison showing the benefit of incorporating direct-path measurements
when reconstructing the reflected beam using the matrix-vector tensor
formulation.}
\end{figure}

\subsubsection{The Performance under Different Sampling Ratios}

Here, similar to the LOS scenario, we evaluate the proposed method
under sampling ratios $r_{s}=0.04-0.1$ for reconstructing $\bm{\mathcal{Y}}$.
Since the direct-beam result is similar to the LOS case in Section
\ref{subsec:Simulation-for-Direction}, we focus on the reconstruction
NMSE of the whole MIMO beam map (the aggregate of direct and reflected
beams), as shown in Fig.~ \ref{fig:Reconstruction-NMSE sampling ratio}.
The proposed method achieves over a $20$\% improvement in NMSE compared
with all baseline methods, as illustrated in Fig. \ref{fig:Reconstruction-NMSE sampling ratio}. 

\begin{figure}
\includegraphics{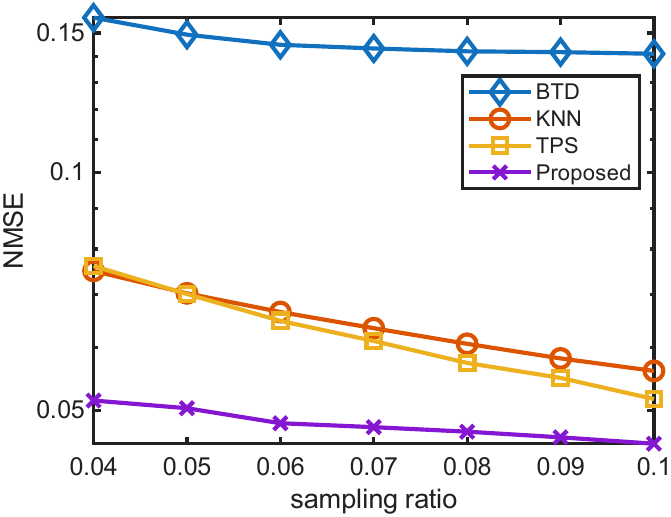}\caption{\label{fig:Reconstruction-NMSE sampling ratio}Reconstruction NMSE
of $\bm{\mathcal{Y}}$ for the coexistence of LOS and reflection scenario
under different sampling ratios. }
\end{figure}

We further set the sampling ratio to $r_{s}=0.1$ and visualize the
reconstruction results for the reflected beam and the whole MIMO beam
map in Fig.~\ref{fig:Visual-plot-of reflect} and Fig.~\ref{fig:Visual-plot-of whole},
respectively. The proposed method accurately reconstructs both components,
whereas TPS and KNN produce only coarse approximations and fail to
capture the underlying beam patterns.
\begin{figure}
\includegraphics{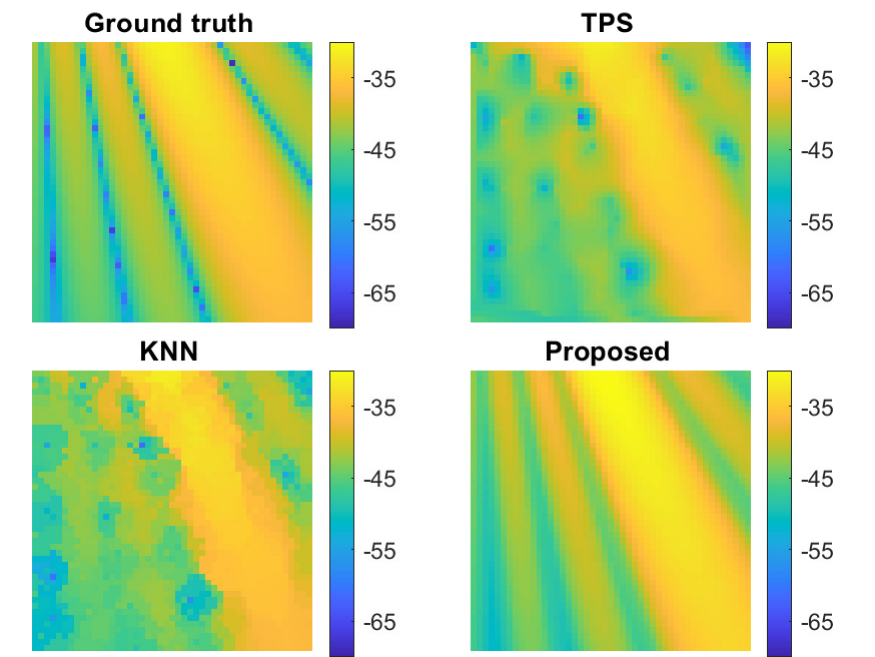}\caption{\label{fig:Visual-plot-of reflect}Visual comparison of the reconstructed
reflected beam map at sampling ratio $r_{s}=0.1$. The proposed method
successfully captures the reflected-beam pattern, while TPS and KNN
produce blurred and inaccurate reconstructions.}
\end{figure}
\begin{figure}
\includegraphics{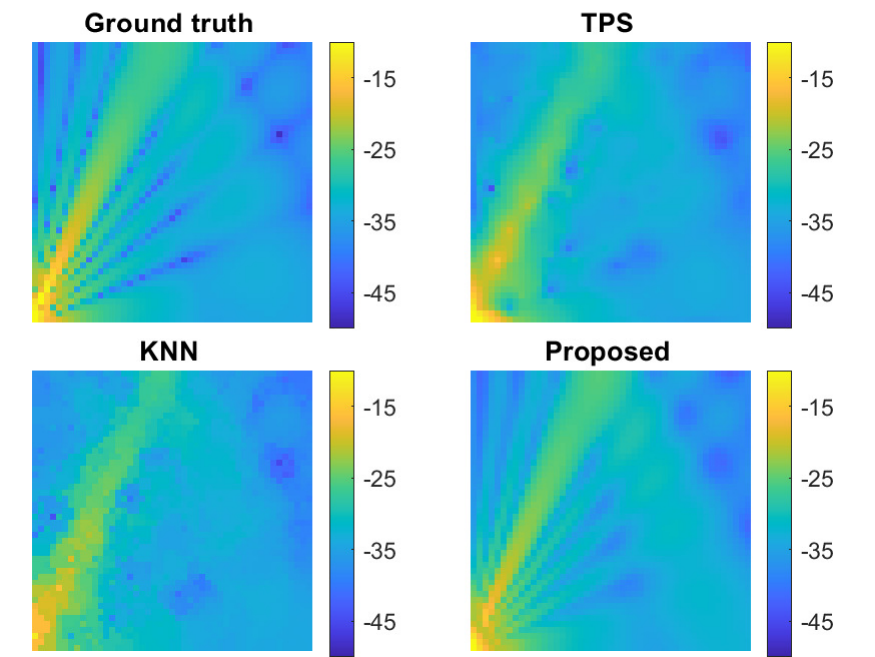}\caption{\label{fig:Visual-plot-of whole}Visual comparison of reconstructed
whole MIMO beam map (LOS + reflection) at sampling ratio $r_{s}=0.1$.
The proposed method reconstructs the beam structure faithfully, outperforming
TPS and KNN.}
\end{figure}

\subsection{Simulation for Coexistence of LOS and Obstruction}

In this subsection, we examine the reconstruction performance when
the direct path is obstructed by a building located within the area
of interest, as illustrated in Fig.~\ref{fig:scenario(a)(b)(c)}(c).
The polygonal coordinates of the building are given by \{$(32,26),(26,32),(30,38),(38,32)$\}.
We evaluate the proposed method under different sampling ratios $r_{s}=0.04-0.1$
for reconstructing $\bm{\mathcal{Y}}$. The NMSE results are presented
in Fig.~\ref{fig:direct_block}. It can be observed that the proposed
method consistently outperforms baselines KNN and TPS, achieving more
than $40$\% improvement in reconstruction accuracy across all sampling
ratios. The superiority of the proposed method is mainly due to its
structured tensor formulation: it enforces near-Toeplitz angular patterns
and monotone distance-dependent attenuation through $\{\bm{G}_{r}\}$
and $\{\bm{\rho}_{r}\}$, which allows the model to reconstruct both
the LOS region and the shadowed region with few samples while preserving
the sharp power drop at the blockage. TPS, with its global smoothness
prior, oversmooths this drop, and KNN suffers from unreliable local
averaging near sparsely sampled blocked areas. 
\begin{figure}
\includegraphics{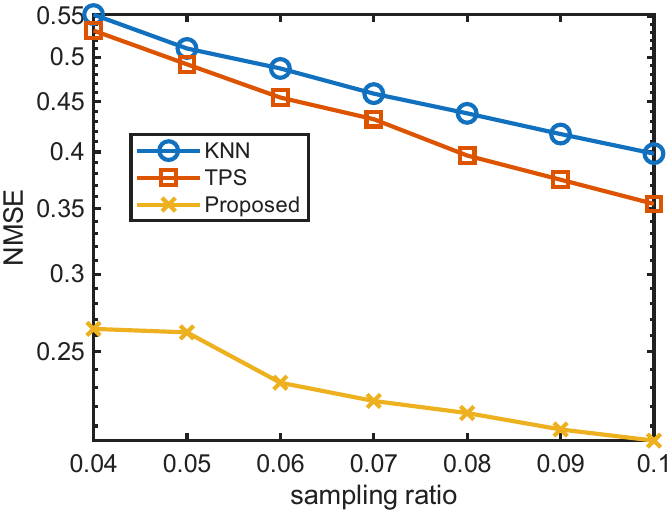}\caption{\label{fig:direct_block}Reconstruction NMSE of $\bm{\mathcal{Y}}$
under the coexistence of LOS and obstruction. The proposed method
consistently achieves lower NMSE than the KNN and TPS baselines across
all sampling ratios, demonstrating a clear advantage in handling the
mixed propagation scenario.}
\end{figure}
\begin{figure}
\includegraphics{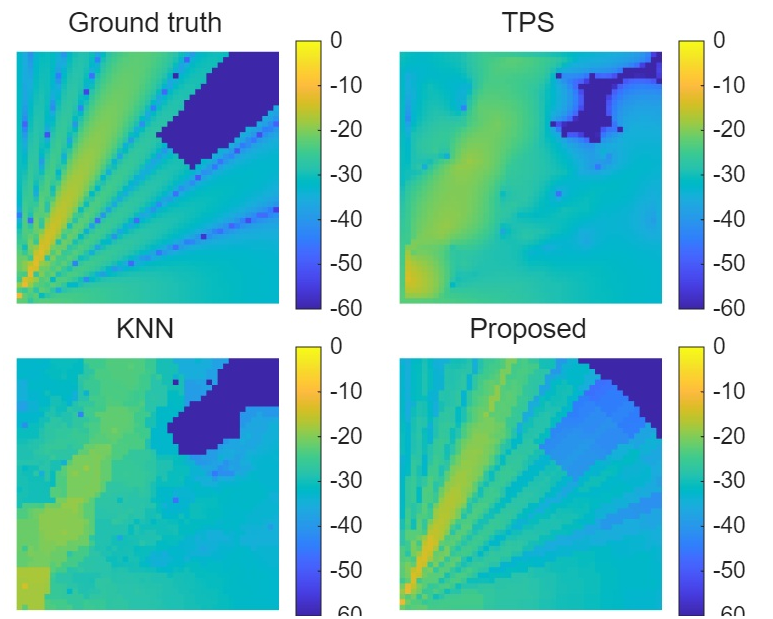}\caption{\label{fig:direct_reflect}Visual comparison of reconstructed MIMO
beam map (LOS + obstruction) at sampling ratio $r_{s}=0.1$. The proposed
method better preserves the overall spatial pattern and partially
recovers the blocked region, whereas TPS and KNN produce distorted
or over-smoothed reconstructions.}
\end{figure}

To further illustrate the reconstruction quality, we set the sampling
ratio to $r_{s}=0.1$ and present the visual comparisons in Fig.~\ref{fig:direct_reflect}.
The proposed method better preserves the spatial pattern and partially
restores the blocked region, whereas TPS and KNN produce distorted
results. The corner region remains difficult to recover due to the
absence of measurements.

\subsection{Simulation for Coexistence of LOS, Reflection and Obstruction}

In this subsection, we examine the reconstruction performance when
both direct and reflected paths exist with obstruction, as illustrated
in Fig. \ref{fig:Visible-plot-of_G12}. To model the reflected path,
a mirror building is generated by mirroring the real building with
respect to the wall, with coordinates \{$(32,70),(26,64),(30,58),(38,64)$\}.
\begin{figure}
\includegraphics{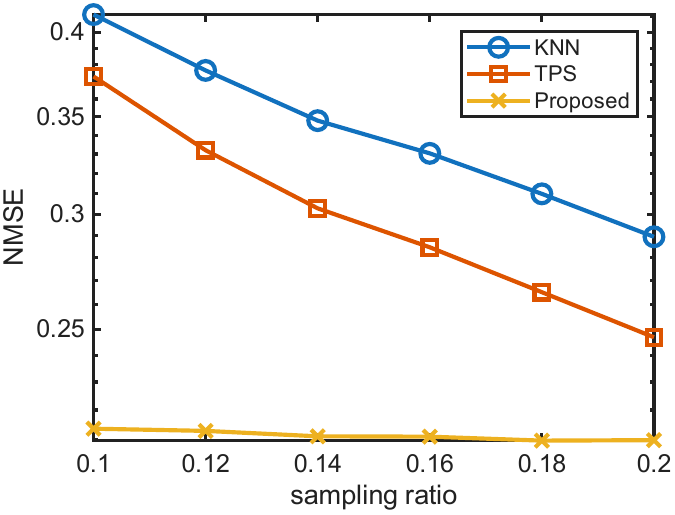}\caption{\label{fig:direct_reflect_block}Reconstruction NMSE of $\bm{\mathcal{Y}}$
under the coexistence of LOS, reflection and obstruction. The proposed
method consistently achieves lower NMSE than the KNN and TPS baselines
across all sampling ratios, demonstrating a clear advantage in handling
the mixed propagation scenario.}
\end{figure}
\begin{figure}
\includegraphics{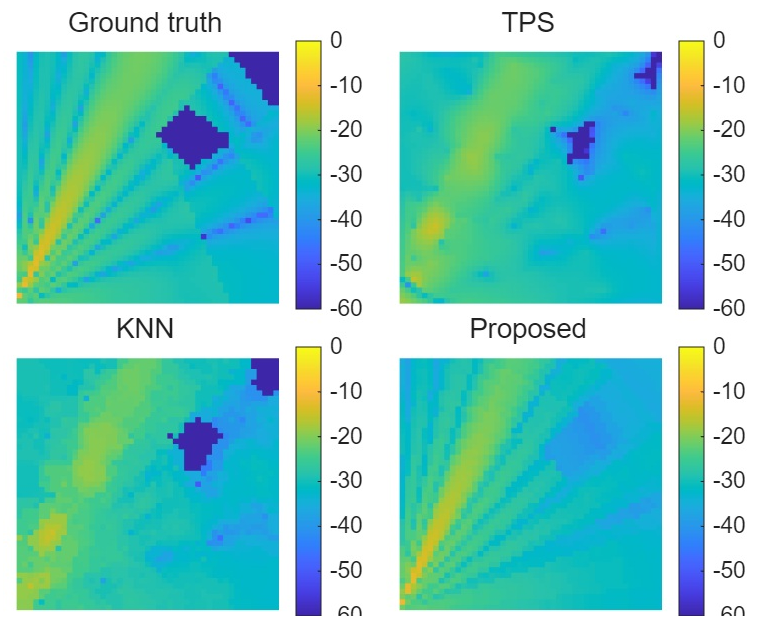}\caption{\label{fig:direct_reflect-blcok_visible}Visual comparison of reconstructed
MIMO beam map (LOS + reflection + obstruction) at sampling ratio $\rho=0.1$.
The proposed method better preserves the overall spatial pattern and
partially recovers the blocked region, whereas TPS and KNN produce
distorted or over-smoothed reconstructions.}
\end{figure}

We evaluate the proposed method under different sampling ratios $r_{s}=0.1-0.2$
for reconstructing $\bm{\mathcal{Y}}$. The NMSE results are presented
in Fig.~\ref{fig:direct_reflect_block}. It can be observed that
the proposed method consistently outperforms baselines KNN and TPS,
achieving more than $20$\% improvement in reconstruction accuracy
under low sampling ratios.

To further illustrate the reconstruction quality, we set the sampling
ratio to $r_{s}=0.1$ and present the visual comparisons in Fig.~\ref{fig:direct_reflect-blcok_visible}.
The proposed method better preserves the spatial pattern and partially
restores the blocked region, whereas TPS and KNN produce distorted
results.

\section{Conclusion\label{sec:Conclusion}}

This paper proposed a structure-aware approach for reconstructing
MIMO beam map from sparse measurements. By transforming Cartesian
observations into a polar coordinate system, we revealed a matrix
and vector outer-product structure associated with different propagation
conditions. The matrix representing beam-space gain was shown to have
an intrinsic Toeplitz pattern, while the vector captured distance-dependent
attenuation. Leveraging these properties, we developed a Toeplitz-structured
matrix-vector tensor decomposition framework and an alternating minimization
algorithm for efficient reconstruction. Simulation results confirmed
that the proposed method achieves more than $20$\% improvement in
reconstruction NMSE across a wide range of sampling ratios, outperforming
interpolation methods and conventional \ac{btd} models. These results
highlight the effectiveness of leveraging the latent angular and radial
structure embedded in MIMO beam map.


\appendices{}


\section{Proof of Lemma \ref{lem:symemtric}\label{sec:Proof-of-Lemma symmetric}}

Since $\text{sin}(\theta_{i})=i/I$, $\text{sin}(\phi_{j})=j/J$,
and $I=J$, the difference between the sine terms depends solely on
the index difference. Specifically, for any integer shift $\delta$
such that indices $i+\delta$ and $j+\delta$ are valid, we have
\begin{align}
 & \text{sin}(\theta_{i+\delta})-\text{sin}(\phi_{j+\delta})\nonumber \\
= & \frac{i+\delta}{I}-\frac{j+\delta}{I}\nonumber \\
= & \frac{i-j}{I}=\text{sin}(\theta_{i})-\text{sin}(\phi_{j}).\label{eq:toeplitz condition}
\end{align}

Then, under (\ref{eq:toeplitz condition}), we examine the difference
between diagonal-shifted entries
\begin{align*}
 & [\bm{G}]_{i,j}-[\bm{G}]_{i+\delta,j+\delta}\\
 & =\Bigg(\left|\sum_{n=0}^{N_{t}-1}\text{exp}^{j\pi n(\text{sin}(\theta_{i})-\text{sin}(\phi_{j}))}\right|^{2}\\
 & \qquad\qquad-\left|\sum_{n=0}^{N_{t}-1}\text{exp}^{j\pi n(\text{sin}(\theta_{i+\delta})-\text{sin}(\phi_{j+\delta}))}\right|^{2}\Bigg)\\
 & =0.
\end{align*}
This implies that matrix $\bm{G}$ has a Toeplitz structure.

For the symmetric property, we compare $[\bm{G}]_{i,j}$ and $[\bm{G}]_{j,i}$.
Note that
\begin{equation}
\sin(\theta_{j})-\sin(\phi_{i})=\frac{j-i}{I}=-(\sin(\theta_{i})-\sin(\phi_{j})).
\end{equation}

Thus, the entry $[\bm{G}]_{j,i}$ can be written as
\begin{align}
[\bm{G}]_{j,i} & =\left|\sum_{n=0}^{N_{t}-1}\exp\left(j\pi n(\sin(\theta_{j})-\sin(\phi_{i}))\right)\right|^{2}\nonumber \\
 & =\left|\sum_{n=0}^{N_{t}-1}\exp\left(-j\pi n(\sin(\theta_{i})-\sin(\phi_{j}))\right)\right|^{2}.
\end{align}

Let $z=\sum_{n=0}^{N_{t}-1}\exp\left(j\pi n(\sin(\theta_{i})-\sin(\phi_{j}))\right)$.
The expression for $[\bm{G}]_{j,i}$ corresponds to $|\bar{z}|^{2}$
(where $\bar{z}$ is the complex conjugate of $z$). Using the property
that $|z|=|\bar{z}|$ for any complex number, we have
\begin{equation}
[\bm{G}]_{i,j}-[\bm{G}]_{j,i}=|z|^{2}-|\bar{z}|^{2}=0.
\end{equation}

This concludes that $\bm{G}$ is symmetric.

\bibliographystyle{IEEEtran}
\bibliography{IEEEabrv,StringDefinitions,ChenBibCV,JCgroup}

\end{document}